\theoremstyle{plain}
\newtheorem{theorem}{Theorem}
\newtheorem{lemma}[theorem]{Lemma}
\newtheorem{proposition}[theorem]{Proposition}
\theoremstyle{definition}
\newtheorem{definition}{Definition}
\DeclareMathOperator{\tr}{Tr}
\newcommand{\bb}{\mathbb}
\newcommand{\mc}{\mathcal}
\newcommand{\norm}[1]{\lVert #1 \rVert}
\newcommand{\tn}[1]{\ensuremath{^{\otimes #1}}}
\newcommand{\ket}[1]{\ensuremath{|#1\rangle}}
\newcommand{\bra}[1]{\ensuremath{\langle#1|}}
\newcommand{\dket}[1]{\ensuremath{|#1\rangle\!\rangle}}
\newcommand{\dbra}[1]{\ensuremath{\langle\!\langle#1|}}
\newcommand{\drangle}{\ensuremath{\rangle\!\rangle}}
\def\ketbra#1{\def\tempa{#1}\futurelet\next\ketbra@i}
\def\ketbra@i{\ifx\next\bgroup\expandafter\ketbra@ii\else\expandafter\ketbra@end\fi}
\def\ketbra@ii#1{\ket{\tempa}\!\bra{#1}}
\def\ketbra@end{\ket{\tempa}\!\bra{\tempa}}
\def\dketbra#1{\def\tempa{#1}\futurelet\next\dketbra@i}
\def\dketbra@i{\ifx\next\bgroup\expandafter\dketbra@ii\else\expandafter\dketbra@end\fi}
\def\dketbra@ii#1{\dket{\tempa}\!\dbra{#1}}
\def\dketbra@end{\dket{\tempa}\!\dbra{\tempa}}
\def\dbraket#1{\def\tempa{#1}\futurelet\next\dbraket@i}
\def\dbraket@i{\ifx\next\bgroup\expandafter\dbraket@ii\else\expandafter\dbraket@end\fi}
\def\dbraket@ii#1{\dbra{\tempa}#1 \drangle}
\def\dbraket@end{\dbra{\tempa}\tempa\drangle}
\begin{document}
\title{Classical vs. quantum satisfiability in linear constraint systems modulo an integer}
\author{Hammam Qassim \& Joel J. Wallman}
\date{\today}

\begin{abstract}
A system of linear constraints can be unsatisfiable and yet admit a solution in the form of quantum observables whose correlated outcomes satisfy the constraints. Recently, it has been claimed that such a satisfiability gap can be demonstrated using tensor products of generalized Pauli observables in odd dimensions.
We provide an explicit proof that no quantum-classical satisfiability gap in any linear constraint system can be achieved using these observables.
We prove a few other results for linear constraint systems modulo $d>2$. We show that a characterization of the existence of quantum solutions when $d$ is prime, due to Cleve et al, holds with a small modification for arbitrary $d$. We identify a key property of some linear constraint systems, called phase-commutation, and give a no-go theorem for the existence of quantum solutions to constraint systems for odd $d$ whenever phase-commutation is present. As a consequence, all natural generalizations of the Peres-Mermin magic square and pentagram to odd prime $d$ do not exhibit a satisfiability gap.
\end{abstract}
\maketitle

\section{Introduction}
The study of linear constraint systems (LCS) originated with the simple proofs of contextuality of Peres and Mermin \cite{peres1991two,Mermin1993}.
These proofs are based on inconsistent systems of linear equations over the binary field that nevertheless admit solutions if we allow for non-commutative versions of the variables, i.e. binary quantum observables.
A gap between classical and quantum satisfiability in this setting is a proof of contextuality \cite{kochen1975problem}.
In addition, such a gap can be used to define a non-local game which can be won with certainty by quantum players but not by classical players, i.e. a \textit{pseudo-telepathic} game \cite{brassard2005quantum}. This link has turned out to be immensely useful in the study of quantum non-locality \cite{arkhipov2012extending,ji2013binary,cleve2014,cleve2017perfect,slofstra2019tsirelson,slofstra2019set,dykema2019non}.
In particular, games constructed from LCSs are suitable for studying the separation between different models of quantum correlations, such as the tensor-product and the commuting-operator models. 
Difficult questions on the separation of these models can be cast as questions on the quantum satisfiability of binary LCSs, an approach which has recently culminated in a number of major results \cite{cleve2017perfect,slofstra2019tsirelson,slofstra2019set}.
Further, non-local games constructed in this way can be used for devising robust self-testing protocols \cite{coladangelo2017robust}.

Quantum satisfiability of LCSs over $\bb{Z}_2$ has been studied in detail, see for example \cite{arkhipov2012extending,ji2013binary,cleve2014,cleve2017perfect}. 
The purpose of this paper is to shed light on the non-binary case, that is, systems of linear equations over $\bb{Z}_d$, where $d>2$. This case is generally less well-understood, although it was considered to varying extents in Refs. \cite{cleve2017perfect,coladangelo2017robust,okay2019homotopical}.
For $d=2$, the canonical examples of a satisfiability gap rely on the commutation relations of qubit Pauli observables. 
Pauli observables can be generalized to higher dimensions in a number of different ways. 
The Weyl-Heisenberg group is arguably the most natural generalization, since it preserves the commutativity structure and error-correction properties of the qubit Pauli group \cite{gottesman2001encoding}. 
Elements of this group are commonly referred to as generalized Pauli operators, and are natural candidates for constructing quantum solutions for LCS over $\bb{Z}_d$ for $d>2$.
Indeed, this was the approach taken in an early version of \cite{coladangelo2017robust}, where robust self-testing routines for maximally entangled qudit states were claimed to exist using generalized Pauli observables in odd dimensions.

Unfortunately the claim in the early version of \cite{coladangelo2017robust} is false.  
In this paper, we provide a concise proof that a satisfiability gap cannot be established using tensor products of generalized Pauli observables in odd dimension.
This fact follows from the non-negative discrete Wigner function for Hilbert spaces of odd dimension \cite{Gross2006,veitch2012negative,delfosse2017equivalence}, but, interestingly, does not require the full Wigner function, only its definition at a single point in phase-space.
While a discrete Wigner function can be defined in even dimensions based on generalized Pauli observables \cite{Gibbons_2004}, it is easy to show that a satisfiability gap can be demonstrated using generalized Pauli observables in every even dimension (see \Cref{tab:table}). 

For a LCS over $\bb{Z}_d$ given by the equation $Mx = b \mod d$, checking whether a classical solution exists is easy, for example by Gaussian elimination if $d$ is prime, or other efficient methods if $d$ is not prime \cite{ching1977linear,storjohann1998fast}. On the other hand, checking whether a quantum solution exists is a difficult task. One characterization is given in \cite{cleve2017perfect} in terms of the so-called solution group.
The solution group associated with $Mx = b$ is a finitely presented group which encodes the algebraic relations satisfied by every quantum solution to $Mx=b$. 
It is generated by symbols $\{g_i\}$, which can be thought of as a non-commutative relaxation of the variables $\{x_i\}$, in addition to a special generator $J$. 
The characterization in \cite{cleve2017perfect} is based on properties of the element $J$. 
Namely, for a LCS over $\bb{Z}_d$ with $d$ prime, a quantum solution exists if and only if $J$ is not equivalent to the identity element in the group.
Checking whether an element in a finitely presented group is equivalent to the identity is an instance of the word problem, which is undecidable in general.
Nevertheless, for small enough instances or in the case of highly structured groups one can use computer software, such as GAP \cite{GAP4}, to find a so-called confluent rewriting system, thus solving the word problem for the group.

In this paper, we note a simple extension of the characterization in \cite{cleve2017perfect} which covers arbitrary $d$. Specifically, a quantum solution exists for a LCS over $\bb{Z}_d$ if and only if $J$ has order $d$.
When $d$ is prime, this characterization reduces to the one in \cite{cleve2017perfect}.

Furthermore, we note a key property of some LCS, which we call phase-commutation. Roughly speaking, phase-commutation is an abstraction of matrix anti-commutation relations of the form $AB = - BA$.
A LCS $Mx  = b$ exhibits phase-commutation if its non-commutative relaxation (as defined by the solution group) contains two variables, $x_1$ and $x_2$, which commute up to a constant; 
\begin{align}\label{eq: intro}
x_1 + x_2   = c + x_2 + x_1, \quad c\neq 0.
\end{align}
Importantly, such relations arise in the LCSs of Peres and Mermin. 
In the current paper, we show that, if $d$ is odd, any relation of the form in \cref{eq: intro}, or of the more general form in \cref{eq: generalized pc}, implies that quantum solutions do not exist in any (finite- or infinite-dimensional) Hilbert space.
As a consequence, this implies that all natural generalizations of the Peres-Mermin magic square and pentagram to odd prime $d$ do not exhibit a satisfiability gap. 

This is in contrast to the case of even $d$, where the Peres-Mermin magic square and pentagram exhibit satisfiability gaps that rely on using Pauli operators to represent \cref{eq: intro}.
In fact, a much stronger statement was proven for $d=2$ in Ref.~\cite{slofstra2019tsirelson}, namely that any group generated by involutions can be embedded (in a strong sense) in the solution group of some binary LCS which admits a quantum solution. 
A significant consequence of this embedding for $d = 2$ is the separation, proven in the latter work, between commuting and tensor-product quantum nonlocal strategies.
Our result shows that a similar statement cannot hold for odd $d$, as the group defined by the phase-commutation relation in \cref{eq: intro} cannot be embedded in a LCS that has a quantum solution.

The idea that phase-commutation can rule out a satisfiability gap appears in (the newer version of) Ref. \cite{coladangelo2017robust} for particular instances of the magic square and pentagram LCSs, and a similar idea appears more recently in \cite{okay2019homotopical}, as part of a homotopical treatment of quantum contextuality. In comparison, our exposition is more general (it applies to all LCSs, and we use it to cover all natural generalizations of the square and pentagram for odd $d$), and perhaps more accessible to the non-expert reader (it does not require familiarity with topological ideas like cell-complexes and group pictures). 

\section{Preliminaries}
A linear constraint system (LCS) over $\bb{Z}_d$ is a pair $(M,b)$, where $M \in \bb{M}_{m\times n}(\bb{Z}_d)$ and $b \in \mathbb{Z}_d^m$.
A classical solution is a vector $a \in \mathbb{Z}_d^n$ such that $Ma=b$ modulo $d$.
\Cref{fig:Mermin}~(a) is a diagrammatic depiction of a LCS that is classically unsatisfiable for all $d\geq 2$, since the sum of the variables along the rows is 0, while the sum of the variables along the columns is 1, which is a contradiction.

A quantum solution $A \equiv (A_1, \dots, A_n)$ is a collection of operators, acting on some Hilbert space, satisfying:
\begin{enumerate}
\item $A_i^d = I$ for all $i \in \{1, \dots, n\}$;
\item $[A_j,A_k] = 0$ whenever there exists a row $i$ such that $M_{ij}\neq 0$ and $M_{ik} \neq 0$; and
\item $\prod_{j} A_{j}^{M_{ij}} = \omega^{b_i}$ for all $i \in \{1, \dots, m\}$.
\end{enumerate}
\begin{figure}[t]
\centering
\providecommand{\outerRadius}{1.1}
\providecommand{\innerRadius}{\outerRadius / 2.61803}
\begin{tikzpicture}[scale=2]
    \node at (0, 0) [rectangle] (v00) {$x_{7}$};
    \node at (1, 0) [rectangle] (v10) {$x_{8}$};
    \node at (2, 0) [rectangle] (v20) {$x_{9}$};
    \node at (0, 1) [rectangle] (v01) {$x_{4}$};
    \node at (1, 1) [rectangle] (v11) {$x_{5}$};
    \node at (2, 1) [rectangle] (v21) {$x_{6}$};
    \node at (0, 2) [rectangle] (v02) {$x_{1}$};
    \node at (1, 2) [rectangle] (v12) {$x_{2}$};
    \node at (2, 2) [rectangle] (v22) {$x_{3}$};
    \draw (v00) -- (v10);
    \draw (v10) -- (v20);
    \draw (v01) -- (v11);
    \draw (v11) -- (v21);
    \draw (v02) -- (v12);
    \draw (v12) -- (v22);
    \draw (v00) -- (v01);
    \draw (v10) -- (v11);
    \draw[dashed] (v20) -- (v21);
    \draw (v02) -- (v01);
    \draw (v12) -- (v11);
    \draw[dashed] (v22) -- (v21);
\end{tikzpicture}
\hspace{5mm}
\begin{tikzpicture}[scale=2]
\node at (90:\outerRadius) [rectangle] (v0) {$x_8$};
\node at (90 + 72*1:\outerRadius) [rectangle] (v1) {$x_1$};
\node at (90 + 72*2:\outerRadius) [rectangle] (v2) {$x_9$};
\node at (90 + 72*3:\outerRadius) [rectangle] (v3) {$x_6$};
\node at (90 + 72*4:\outerRadius) [rectangle] (v4) {$x_4$};
\node at (-90:\innerRadius) [rectangle] (v5) {$x_7$};
\node at (-90 + 72*1:\innerRadius) [rectangle] (v6) {$x_{10}$};
\node at (-90 + 72*2:\innerRadius) [rectangle] (v7) {$x_3$};
\node at (-90 + 72*3:\innerRadius) [rectangle] (v8) {$x_2$};
\node at (-90 + 72*4:\innerRadius) [rectangle] (v9) {$x_{5}$};
\draw (v0) -- (v8);
\draw (v8) -- (v9);
\draw (v9) -- (v2);
\draw (v0) -- (v7);
\draw (v7) -- (v6);
\draw (v6) -- (v3);
\draw (v1) -- (v9);
\draw (v9) -- (v5);
\draw (v5) -- (v3);
\draw (v4) -- (v6);
\draw (v6) -- (v5);
\draw (v5) -- (v2);
\draw[dashed] (v1) -- (v8);
\draw[dashed] (v8) -- (v7);
\draw[dashed] (v7) -- (v4);
\end{tikzpicture}
    \caption{Linear constraint systems for the magic square (left) and pentagram (right)~\cite{Mermin1993}.
    Each variable can take values modulo $d$.
    The linear constraints are that the values of the variables along a solid (dashed) line add to zero (one) modulo $d$.
    Note that whether the games are ``magic'' depends upon the constraints, e.g., if all lines are solid then both games are satisfiable.}
    \label{fig:Mermin}
\end{figure}
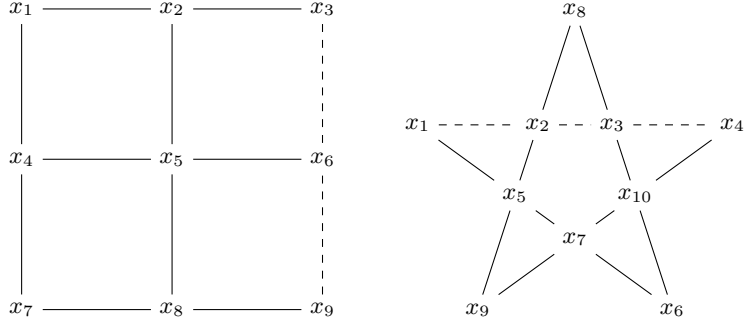
By condition (1), each operator $A_i$ is unitary and the eigenvalues are $d$th roots of unity.
By condition (2), whenever $x_j$ and $x_k$ appear in one of the constraints, their operator counterparts $A_j$ and $A_k$ commute.
In particular, all operators corresponding to a constraint pairwise commute, and therefore can be simultaneously measured.
By condition (3), the expectation values of the operators satisfy the specified constraints.
Here and below, we set $\omega \equiv \exp 2\pi i /d$, where $d$ is to be understood from context.
Quantum solutions for the magic square and pentagram are shown in \Cref{fig:SAT}.

For a set $S=\{ s, t, u, \dots \}$, the \textit{free group} $\mc{F}(S)$ generated by $S$ consists of all words in the letters $S \cup S'$, where $S'=\{ s^{-1}, t^{-1}, u^{-1}, \dots \}$. Two words correspond to the same element in $\mc{F}(S)$ if one of them can be reduced to the other using contractions of the form $e =s^{-1}s =ss^{-1}$, where $e$ is the identity element.
Let $G$ be a group. 
The \textit{normal closure} of a subset $R \subseteq G$ is the normal subgroup $N(R)$ generated by the set $\{ grg^{-1} : g \in G, r \in R\}$. 
Let $R = \{r_1, \dots, r_k \}$ be a set of words in $\mc{F}(S)$. The \textit{finitely presented group} with generators $S$ and relations $\{ r_1 =e , \dots, r_k =e \}$ is the quotient $\mc{F}(S)/N(R)$. For further reading related to these definitions, see any standard reference on abstract algebra, for instance \cite{grillet2007abstract}.

To every linear constraint system one can associate a \textit{solution group}, which is a finitely presented group with relations that encode the defining properties of a quantum solution. Namely,
\begin{definition}
For a LCS $(M,b)$, the \textit{solution group} $\mc{G}(M,b)$ is the finitely presented group generated by the symbols $\{ J, g_1, \dots, g_n \}$ and the relations
\begin{enumerate}
\item $g_i^d = e$, $J^d = e$,
\item $J g_i J^{-1} g_i^{-1} = e$,
\item $g_j g_k g_j^{-1} g_k^{-1} = e$ whenever there exists a row $i$ such that $M_{ij}\neq 0$ and $M_{ik} \neq 0$,
\item $J^{-b_i} \prod_{j} g_{j}^{M_{ij}} = e$ for all $i \in \{1, \dots, m\}$.
\end{enumerate}
\end{definition}

\begin{proposition}\label{reps of sg}
Let $(M,b)$ be a LCS. Suppose that a quantum solution exists in dimension $D \in \mathbb{N} \cup \{ \infty\}$. Then $\mc{G}(M,b)$ admits a $D$-dimensional representation $\phi$ in which $\phi(J) = \omega I$. 
\end{proposition}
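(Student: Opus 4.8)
The plan is to build the representation directly from the given quantum solution and then verify that it respects every defining relation of $\mc{G}(M,b)$, invoking the universal property of finitely presented groups. Concretely, let $A=(A_1,\dots,A_n)$ be a quantum solution acting on a Hilbert space $\mc{H}$ of dimension $D$. I would define a map on the generators by setting $\phi(g_i)=A_i$ for each $i$ and $\phi(J)=\omega I$, and extend it multiplicatively to words in the free group $\mc{F}(\{J,g_1,\dots,g_n\})$. Since each $A_i$ is invertible (its inverse is $A_i^{d-1}$, by property (1) of a quantum solution) and $\omega I$ is invertible, this lands in the group of invertible operators on $\mc{H}$. By the universal property, $\phi$ descends to a genuine representation of $\mc{G}(M,b)$ precisely when every defining relator is sent to the identity.

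The core of the argument is the one-to-one matching between the four families of relations in the solution group and the three defining properties of a quantum solution, together with two elementary facts about $\omega I$. First, relation (1) holds because $A_i^d=I$ (property (1)) and $(\omega I)^d=\omega^d I=I$, using $\omega=\exp(2\pi i/d)$. Second, relation (2), $Jg_iJ^{-1}g_i^{-1}=e$, is automatic since $\phi(J)=\omega I$ is a scalar and hence central, so it commutes with every $A_i$. Third, relation (3) maps to $A_jA_kA_j^{-1}A_k^{-1}=I$, which is exactly the commutation property (2) of the quantum solution, imposed on the same index pairs $(j,k)$. Fourth, relation (4) maps to $(\omega I)^{-b_i}\prod_j A_j^{M_{ij}}=\omega^{-b_i}\,\omega^{b_i} I=I$ by property (3).

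One point that deserves care, rather than a genuine obstacle, is the well-definedness of the ordered products $\prod_j g_j^{M_{ij}}$ appearing in relation (4): the factors $g_j$ with $M_{ij}\neq 0$ all pairwise commute by relation (3), so the product is independent of the chosen ordering, and likewise $\prod_j A_j^{M_{ij}}$ is unambiguous by property (2). Thus the two products correspond under $\phi$ regardless of the convention used to list the factors. With these checks in place, $\phi$ is a well-defined $D$-dimensional representation of $\mc{G}(M,b)$ with $\phi(J)=\omega I$, as required. I do not expect any hard step here; the entire content is the dictionary between the quantum-solution axioms and the group relations, and the only subtlety to watch is that the single scalar $\omega I$ simultaneously discharges both the order-$d$ relation and the centrality relation for $J$.
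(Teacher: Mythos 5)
Your proposal is correct and follows exactly the paper's approach: define $\phi(g_i)=A_i$, $\phi(J)=\omega I$, and verify that each defining relation of the solution group is satisfied via the corresponding axiom of a quantum solution. The paper simply states this verification is straightforward; you have carried it out explicitly, including the well-definedness of the ordered products in relation (4).
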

\begin{proof}
Given a quantum solution $\{A_1, \dots, A_n \}$ in dimension $D$, it is straight-forward to check that the homomorphic extension of the map $\phi(g_i) = A_i$, $\phi(J) = \omega I$, is a $D$-dimensional representation of $\mc{G}(M,b)$. 
\end{proof}

\begin{figure}
\centering
\providecommand{\outerRadius}{1.1}
\providecommand{\innerRadius}{\outerRadius / 2.61803}
\begin{tikzpicture}[scale=2]
    \node at (0, 0) [rectangle] (v00) {$IX$};
    \node at (1, 0) [rectangle] (v10) {$XI$};
    \node at (2, 0) [rectangle] (v20) {$XX$};
    \node at (0, 1) [rectangle] (v01) {$ZI$};
    \node at (1, 1) [rectangle] (v11) {$IZ$};
    \node at (2, 1) [rectangle] (v21) {$ZZ$};
    \node at (0, 2) [rectangle] (v02) {$ZX$};
    \node at (1, 2) [rectangle] (v12) {$XZ$};
    \node at (2, 2) [rectangle] (v22) {$YY$};
    \draw (v00) -- (v10);
    \draw (v10) -- (v20);
    \draw (v01) -- (v11);
    \draw (v11) -- (v21);
    \draw (v02) -- (v12);
    \draw (v12) -- (v22);
    \draw (v00) -- (v01);
    \draw (v10) -- (v11);
    \draw[dashed] (v20) -- (v21);
    \draw (v02) -- (v01);
    \draw (v12) -- (v11);
    \draw[dashed] (v22) -- (v21);
\end{tikzpicture}
\hspace{0mm}
\begin{tikzpicture}[scale=2]
\node at (90:\outerRadius) [rectangle] (v0) {$YII$};
\node at (90 + 72*1:\outerRadius) [rectangle] (v1) {$XXX$};
\node at (90 + 72*2:\outerRadius) [rectangle] (v2) {$IYI$};
\node at (90 + 72*3:\outerRadius) [rectangle] (v3) {$IXI$};
\node at (90 + 72*4:\outerRadius) [rectangle] (v4) {$XYY$};
\node at (-90:\innerRadius) [rectangle] (v5) {$XII$};
\node at (-90 + 72*1:\innerRadius) [rectangle] (v6) {$IIY$};
\node at (-90 + 72*2:\innerRadius) [rectangle] (v7) {$YXY$};
\node at (-90 + 72*3:\innerRadius) [rectangle] (v8) {$YYX$};
\node at (-90 + 72*4:\innerRadius) [rectangle] (v9) {$IIX$};
\draw (v0) -- (v8);
\draw (v8) -- (v9);
\draw (v9) -- (v2);
\draw (v0) -- (v7);
\draw (v7) -- (v6);
\draw (v6) -- (v3);
\draw (v1) -- (v9);
\draw (v9) -- (v5);
\draw (v5) -- (v3);
\draw (v4) -- (v6);
\draw (v6) -- (v5);
\draw (v5) -- (v2);
\draw[dashed] (v1) -- (v8);
\draw[dashed] (v8) -- (v7);
\draw[dashed] (v7) -- (v4);
\end{tikzpicture}
\caption{Quantum solutions for the magic square (left) and pentagram (right) for $d=2$~\cite{Mermin1993}.
    Each vertex is a unitary operator whose eigenvalues are $d$th roots of unity.
    The operators along a solid (dashed) line multiply to $I$ ($-I$) modulo $d$.}
    \label{fig:SAT}
\end{figure}
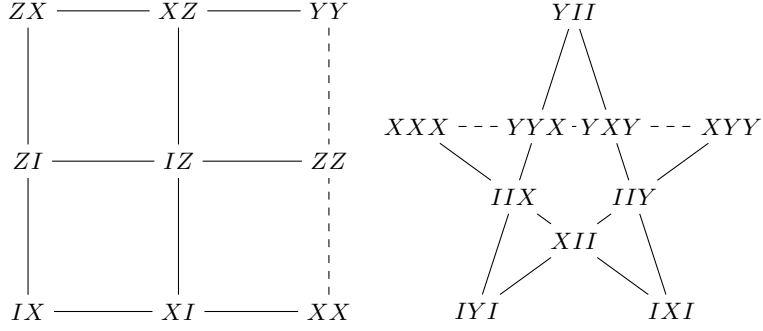
The single-qubit Pauli operators can be generalized to arbitrary dimensions $D$ via the generalized \emph{shift} and \emph{boost} operators
\begin{align}\label{eq:defineXZ}
	X &= \sum_{j\in\mathbb{Z}_D} \ketbra{j + 1}{j} \notag\\
	Z &= \sum_{j\in\mathbb{Z}_D} \omega^j \ketbra{j} \,.
\end{align}
The generalized shift and boost operators satisfy the commutation relation
\begin{align}\label{eq:XZcommutation}
	Z^p X^q = \omega^{-pq} X^q Z^p	\,.
\end{align}
A generalized Pauli operator for a single qu$D$it is any operator $A\propto Z^p X^q$ for some $p,q\in\mathbb{Z}_D$.
For $n$ qu$D$its, a generalized Pauli operator is defined as a tensor product of single qu$D$it generalized Paulis.
From \cref{eq:XZcommutation}, any two generalized Pauli operators commute up to a $D$th root of unity.

\section{Quantum solutions using generalized Pauli operators}

\begin{theorem}\label{thm:pauli_assignment}
Let $D$ be odd. Then the LCS $(M,b)$ has a quantum solution using $n$-qu$D$it generalized Pauli operators if and only if it is classically satisfiable.
\end{theorem}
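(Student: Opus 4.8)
The plan is to prove the two directions separately, with the reverse implication (classically satisfiable $\Rightarrow$ quantum solution) being immediate and the forward implication carrying all the content. For the easy direction, given a classical solution $a\in\mathbb Z_d^n$ I would take the scalar operators $A_j=\omega^{a_j}I$. These are generalized Pauli operators (proportional to $Z^0X^0$), and conditions (1)--(3) then reduce to $\omega^{(Ma)_i}I=\omega^{b_i}I$, which holds because $Ma=b$. Note that throughout I take the quDit dimension to equal the modulus, $D=d$, so the solution operators act on the fixed space $(\mathbb C^d)^{\otimes n}$.

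For the forward direction the key move is to choose the right normalization of the Pauli operators, and this is exactly where oddness enters. Since $D$ is odd, $2$ is invertible mod $d$; writing $2^{-1}$ for its inverse, I would work with the Weyl operators $W(\bs v)=\omega^{2^{-1}pq}Z^pX^q$ for $\bs v=(p,q)$, together with their tensor products $W(\bs v)$ for $\bs v\in\mathbb Z_d^{2n}$. The three properties I would establish first, all of which rely on $d$ being odd, are: (a) $W(\bs v)^k=W(k\bs v)$, and in particular $W(\bs v)^d=I$; (b) the composition law $W(\bs u)W(\bs v)=\omega^{-2^{-1}[\bs u,\bs v]}W(\bs u+\bs v)$, where $[\cdot,\cdot]$ is the symplectic form, so that two Weyl operators commute iff $[\bs u,\bs v]=0$; and (c) trace-orthogonality $\tr\!\big(W(\bs u)^\dagger W(\bs v)\big)=d^n\delta_{\bs u,\bs v}$, which forces $W(\bs w)\propto I$ only when $\bs w=0$.

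Given these, the argument is short. Because $(Z^pX^q)^d=I$ for odd $d$, condition (1) forces each operator to carry a $d$th-root-of-unity phase, so any quantum solution can be written $A_j=\omega^{c_j}W(\bs v_j)$ with $c_j\in\mathbb Z_d$ and $\bs v_j\in\mathbb Z_d^{2n}$. Using (a) to write $A_j^{M_{ij}}=\omega^{c_jM_{ij}}W(M_{ij}\bs v_j)$ and (b) to merge the factors within a constraint---here the reordering phases vanish because condition (2) makes all operators of constraint $i$ commute, i.e.\ $[\bs v_j,\bs v_k]=0$ whenever $M_{ij},M_{ik}\neq0$---condition (3) collapses to the single operator equation $\omega^{\sum_j M_{ij}c_j}\,W\!\big(\sum_j M_{ij}\bs v_j\big)=\omega^{b_i}I$ for every $i$. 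Property (c) then splits this into $\sum_j M_{ij}\bs v_j=0$ and, crucially, $\sum_j M_{ij}c_j=b_i \bmod d$. The latter is precisely $Mc=b$, so $c=(c_1,\dots,c_n)$ is a classical solution, completing the proof.

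The main obstacle---and the entire role of oddness---is concentrated in properties (a)--(c): the ability to absorb every operator phase into an honest $\mathbb Z_d$-valued scalar $c_j$. This is exactly what fails for even $d$, where $(Z^pX^q)^d=(-1)^{pq}I$ permits $2d$th-root-of-unity phases that cannot be reduced mod $d$, and that extra freedom is what the Peres--Mermin constructions exploit. I would also remark that $W(\bs v)^d=I$ together with the composition law is nothing but the statement that the Weyl operators assemble into the phase-point operator at the origin, $A_{\bs 0}=d^{-n}\sum_{\bs a}W(\bs a)$; this is the single point of the discrete Wigner function alluded to in the introduction, and no other phase-space point enters the argument.
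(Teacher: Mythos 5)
Your proof is correct and is essentially the paper's own argument in different packaging: the residual phase $c_j$ in your Weyl normalization $A_j=\omega^{c_j}W(\bs{v}_j)$ is exactly the paper's value map $v(\omega^k Z^pX^q)=k+2^{-1}pq$, which the paper defines via the trace against the parity (phase-point) operator, $\omega^{v(\sigma)}=\tr \Pi_D^{\otimes n}\sigma$. Your composition law together with the vanishing of the symplectic form for commuting operators is the same ``$v$ is additive on commuting Paulis'' lemma that carries the paper's proof, so there is no gap and no genuinely different route here.
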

\begin{proof}
The first direction is easy: suppose that $Mx = b$ has a classical solution $x \rightarrow  a \equiv (a_1, \dots, a_n)$. Then $A_j = \omega^{a_j} I$ defines a quantum solution in which every operator is a Pauli operator. Conversely, let us assume that the LCS admits a quantum solution $\{A_j\}$ consisting of generalized $n$-qu$D$it Pauli operators. We will construct a classical solution by consistently assigning a root of unity $\omega^{v(\sigma)}$ to each generalized $n$-qu$D$it Pauli operator $\sigma$. The map $v$ will be defined in terms of the parity operator $\Pi_D \equiv \sum_{j=0}^{D-1} \ketbra{-j}{j}$. Namely, we set
\begin{align}\label{map}
\omega^{v(\sigma)} = \tr \Pi_D\tn{n} \sigma.
\end{align}
For a single-qu$D$it Pauli $\sigma = \omega^k Z^p X^q$, \cref{map} gives $v(\sigma) = k+2^{-1} pq $, where the inverse is modulo $D$. 
It can be readily verified that the following identities hold for any $n$-qu$D$it generalized Paulis $\sigma,\tau$
\begin{enumerate}
\item $v(\omega^{j}I) =j$,
\item $v(\sigma \otimes \tau) = v(\sigma) + v(\tau) \mod D$ ,
\item $v(\sigma^{j}) = j v(\sigma) \mod D$,
\end{enumerate}
To prove that the map $A_j \rightarrow v(A_j)$ gives a classical solution to the LCS in question, it is enough to show that, if $\sigma$ and $\tau$ are commuting $n$-qu$D$it generalized Pauli operators, then $v(\sigma\tau) = v(\sigma)+ v(\tau)$.
Indeed if that were true then using identities $1$ \& $3$ we get, for every row $i$ of $M$, 
\begin{align}
\sum_{j =1}^{n} M_{i,j} v\left(A_j^{}\right) = \sum_{j =1}^{n} v\left(A_j^{M_{i,j}}\right) = v\left( \prod_{j=1}^n A_j^{M_{i,j}} \right) = v\left( \omega^{b_i} I\right) = b_i.
\end{align}
Let $\sigma \propto Z[\vec{p}] X[\vec{q}]$ and $\tau \propto Z[\vec{r}] X[\vec{s}]$ be tensor products of generalized Pauli operates.
Here we are denoting $A[\vec{a}] = \bigotimes_{k=1}^n A^{a_k}$ for an $n$-element vector $\vec{a} = (a_1,\ldots, a_n)$, and, for simplicity, ignoring the phase factors in the definition of generalized Paulis, which does not affect the analysis.
Note that 
\begin{align}\label{eq:product_value}
	v(Z^p X^q) + v(Z^r X^s) = 2^{-1}(pq+rs)	\ \forall\, p,q,r,s.
\end{align}
On the other hand, by \cref{eq:XZcommutation},
\begin{align}\label{eq:value_product}
	v(Z^p X^q Z^r X^s) = v(\omega^{-qr} Z^{p+r}X^{q+s}) = - qr + 2^{-1}(p+r)(q+s)	\,.
\end{align}
Comparing Eq.~\eqref{eq:product_value} and \eqref{eq:value_product} and using identity $2$,
\begin{align}
	v(Z[\vec{p}]X[\vec{q}] Z[\vec{r}]X[\vec{s}]) = v( Z[\vec{p}]X[\vec{q}])+ v( Z[\vec{r}]X[\vec{s}]) \Longleftrightarrow \vec{p}\cdot\vec{s} - \vec{q}\cdot\vec{r} = 0 \ ({\rm mod}\ d)	\,.
\end{align}
The condition $\vec{p}\cdot\vec{s} - \vec{q}\cdot\vec{r} = 0$ is precisely the requirement that $	Z[\vec{p}]X[\vec{q}]$ and $Z[\vec{r}]X[\vec{s}]$ commute.  Indeed, by \cref{eq:XZcommutation}
\begin{align}
	Z[\vec{p}]X[\vec{q}] Z[\vec{r}]X[\vec{s}] = \omega^{\vec{q}\cdot \vec{r}}Z[\vec{p}] Z[\vec{r}]X[\vec{s}]X[\vec{q}] = \omega^{\vec{q}\cdot \vec{r} - \vec{p}\cdot \vec{s}} Z[\vec{r}]X[\vec{s}] Z[\vec{p}]X[\vec{q}]\,.
\end{align}
Therefore $v(\sigma\tau) = v(\sigma) + v(\tau)$ for any commuting $\sigma$ and $\tau$, finishing the proof.
\end{proof}
The above proof fails for even $D$.  
The proof hinges on the ability to consistently assign a root of unity to every generalized Pauli operator. For $\sigma = \omega^k Z^p X^q$, \cref{map} reads
\begin{align}
\omega^{v(\sigma)} = \tr \Pi_D \sigma = \sum_j \bra{j}\sigma\ket{-j} = \sum_{j: 2j = q} \omega^{pj +k}.
\end{align}
When $D$ is even, the sum evaluates to $0$ whenever $q$ is odd, making it impossible to define the map in this way.
One may wonder whether the map $v$ can be defined in a different way for even $D$. This is also impossible, as shown by the family of quantum solutions in \Cref{tab:table}.
Note that the parity operator $\Pi_D$ in the proof of \Cref{thm:pauli_assignment} is one of the phase point operators for the discrete Wigner function in odd dimensions~\cite{Gross2006}. 
While it is possible to define a discrete Wigner function in even dimensions based on generalized Pauli operators \cite{Gibbons_2004}, important properties, such as non-negativity of stabilizer states and Clifford-covariance, are lost in that case \cite{Zhu_2016}.
\begin{table}[]
\begin{tabular}{|c|c|c|}
\hline
 $Z^{t}\otimes X^{t} $ & $X^{t}\otimes Z^{t}$  & $X^{t}Z^{t} \otimes Z^{t}X^{t}$  \\ \hline
 $Z^{t}\otimes I$ & $I\otimes Z^{t}$  & $Z^{t}\otimes  Z^{t}$ \\ \hline
 $I\otimes X^{t}$ & $X^{t}\otimes I$  & $X^{t}\otimes X^{t}$  \\ \hline
\end{tabular}
\quad\quad
\begin{tabular}{|c|c|c|}
\hline
 $Z^{t+1}\otimes X^{t+1} $ & $X^{t}\otimes Z^{t}$  & $X^{t}Z^{t-1} \otimes Z^{t}X^{t-1}$  \\ \hline
 $Z^{t-1}\otimes I$ & $I\otimes Z^{t}$  & $Z^{t+1}\otimes  Z^{t}$ \\ \hline
 $I\otimes X^{t-1}$ & $X^{t}\otimes I$  & $X^{t}\otimes X^{t+1}$  \\ \hline
\end{tabular}
\caption{Quantum solutions to a generalized magic square LCS modulo $D=2t$, for odd (even) $t$ on the left (right). In both cases, the generalized two-qu$D$it Pauli operators along every row and column mutually commute, their product along the third column is $\omega^t I = -I$, and their product along the rows and remaining columns is $I$. This shows that a satisfiability gap exists using n-qu$D$it generalized Pauli operators for every even dimension $D$. }
\label{tab:table}
\end{table}
\section{Phase-commutation and existence of a quantum solution}
\Cref{thm:pauli_assignment} asserts that a quantum/classical satisfiability gap cannot be illustrated using generalized Pauli operators in odd dimension.
We now show that the canonical commutation relations of these operators cannot be embedded in the solution group of a LCS that has a quantum solution.
Some of this discussion is inspired by (the corrected version of) Ref.~\cite{coladangelo2017robust}, but it applies more generally than what is considered in that work.

A key characterization of quantum satisfiability when $d$ is prime is the following:
\begin{theorem}{(Cleve et al, \cite{cleve2017perfect}).}\label{thm: cleve}
Let $(M,b)$ be a LCS over $\mathbb{Z}_d$ where $d$ is prime. Then $(M,b)$ admits a quantum solution if and only if the solution group $\mc{G}(M,b)$ has the property that $J \neq e$.
\end{theorem}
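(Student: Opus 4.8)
The plan is to prove both directions by identifying quantum solutions with representations of $\mc{G}(M,b)$ in which $J$ acts as the scalar $\omega I$, sharpening \Cref{reps of sg}. The key observation is that such a representation is \emph{the same data} as a quantum solution: given any representation $\phi$ with $\phi(J) = \omega I$, the operators $A_i \equiv \phi(g_i)$ automatically satisfy the three defining conditions, since relation (1) gives $A_i^d = I$ (and unitarity), relation (3) gives $[A_j,A_k]=0$ on shared rows, and relation (4) gives $\prod_j A_j^{M_{ij}} = \phi(J)^{b_i} = \omega^{b_i} I$. Thus the theorem reduces to the claim that $\mc{G}(M,b)$ admits a representation with $\phi(J) = \omega I$ if and only if $J \neq e$. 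The forward direction is then immediate: a quantum solution yields such a $\phi$ by \Cref{reps of sg}, and since $\omega \neq 1$ we get $\phi(J) \neq \phi(e)$, forcing $J \neq e$.

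For the converse I would construct the representation explicitly from the left regular representation $\lambda$ of $\mc{G}(M,b)$ on $\ell^2\big(\mc{G}(M,b)\big)$, which is finite-dimensional when the group is finite and infinite-dimensional otherwise, both cases being permitted by the definition of a quantum solution. Relation (2) makes $J$ central, so every $\lambda(J^k)$ commutes with all of $\lambda$, and the character projection
\begin{align}
P_\omega \equiv \frac{1}{d}\sum_{k=0}^{d-1} \omega^{-k}\,\lambda(J^k)
\end{align}
onto the $\omega$-eigenspace of $\lambda(J)$ is an invariant orthogonal projection satisfying $\lambda(J)P_\omega = \omega P_\omega$. Restricting $\lambda$ to the range of $P_\omega$ then yields a representation $\phi$ with $\phi(J) = \omega I$, provided $P_\omega \neq 0$.

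The crux, and the only place where primality of $d$ is used, is establishing $P_\omega \neq 0$. Since $J^d = e$ the order of $J$ divides $d$; because $d$ is prime and $J \neq e$, the order is \emph{exactly} $d$, so $e, J, \dots, J^{d-1}$ are distinct group elements and $P_\omega\delta_e = \tfrac{1}{d}\sum_k \omega^{-k}\delta_{J^k}$ is a nonzero combination of distinct basis vectors. I expect this to be the main obstacle conceptually rather than technically: the argument collapses for composite $d$, where $J \neq e$ forces only that the order of $J$ is some divisor of $d$ exceeding $1$, which need not equal $d$, so the primitive root $\omega$ may fail to be an eigenvalue of $\lambda(J)$ and $P_\omega$ may vanish. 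This is exactly the gap that the paper's later refinement—demanding that $J$ have order $d$ rather than merely $J \neq e$—is designed to close.
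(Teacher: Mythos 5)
Your proof is correct and follows essentially the same route as the paper's (which proves the more general \Cref{thm: char1} in \Cref{app: proof}): both build the solution from left multiplication on $\ell^2(\mc{G}(M,b))$ and both reduce the nonvanishing of the relevant subspace to the distinctness of $e, J, \dots, J^{d-1}$, which is exactly where primality of $d$ enters. The only cosmetic difference is that you cut down to the full $\omega$-eigenspace of $\lambda(J)$ via the central character projection $P_\omega$, whereas the paper restricts to the cyclic subspace generated by the fiducial vector $\ket{\psi} = d^{-1/2}\sum_c \omega^c\ket{J^c}$ (an $\omega^{-1}$-eigenvector of $L_J$, hence its use of $A_i = L_{g_i}^{-1}$); both choices work.
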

The proof of \Cref{thm: cleve} relies on the fact that, when $J \neq e$, the group algebra of $\mc{G}(M,b)$ contains a subspace on which it is easy to construct a quantum solution.
It turns out that the proof works for non-prime $d$ as well, provided we make an adjustment to the statement of the theorem.
\begin{theorem}\label{thm: char1}
Let $(M,b)$ be a LCS over $\mathbb{Z}_d$ with $d > 1$. Then $(M,b)$ admits a quantum solution if and only if the order of $J$ is $d$.
\end{theorem}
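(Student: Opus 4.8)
The plan is to reduce the existence of a quantum solution to the existence of a representation $\phi$ of $\mc{G}(M,b)$ with $\phi(J)=\omega I$, and then to characterize when such a representation exists purely in terms of the order of $J$. For the forward direction I would simply invoke \Cref{reps of sg}: a quantum solution in dimension $D\ge 1$ yields a representation with $\phi(J)=\omega I$. Since $\omega=\exp(2\pi i/d)$ has multiplicative order $d$, the operator $\omega I$ has order exactly $d$, so any relation $J^m=e$ forces $\phi(J)^m=I$ and hence $d\mid m$; thus the order of $J$ is a multiple of $d$. Combined with relation (1), $J^d=e$, which makes the order divide $d$, this pins the order of $J$ at exactly $d$.

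For the converse I would follow the group-algebra strategy behind \Cref{thm: cleve}. Working in the complex group algebra $\mathbb{C}[\mc{G}]$ (with $\mc{G}=\mc{G}(M,b)$), I form the element
\[
P = \frac{1}{d}\sum_{k=0}^{d-1} \omega^{-k} J^k .
\]
Because $J$ is central (relation (2)) and $J^d=e$, a short computation shows that $P$ is a self-adjoint idempotent, with respect to the inner product making group elements orthonormal, satisfying $JP=\omega P$. The decisive point is that $P\neq 0$ \emph{precisely} when $J$ has order $d$: the powers $e,J,\dots,J^{d-1}$ are then distinct basis elements of $\mathbb{C}[\mc{G}]$, so the nonzero coefficients $d^{-1}\omega^{-k}$ cannot cancel; whereas if the order of $J$ were a proper divisor $d'\mid d$, regrouping the sum by residues modulo $d'$ produces the geometric factor $\sum_{q=0}^{d/d'-1}\omega^{-qd'}=0$, forcing $P=0$.

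Given $P\neq 0$, I would let $\mathcal{H}$ be the closure of the left ideal $\mathbb{C}[\mc{G}]P$ inside $\ell^2(\mc{G})$ and take $\phi$ to be the left regular representation restricted to $\mathcal{H}$. This is unitary, since left multiplications permute the orthonormal basis and commute with the right multiplication by $P$, so $\mathcal{H}$ is invariant. On $\mathcal{H}$ one has $\phi(J)\xi=J\eta P=\eta(JP)=\omega\,\eta P=\omega\xi$ for $\xi=\eta P$, using centrality of $J$, hence $\phi(J)=\omega I$. Setting $A_j=\phi(g_j)$, relations (1), (3), (4) of the solution group translate directly into the three defining conditions of a quantum solution, with $\prod_j A_j^{M_{ij}}=\phi(J^{b_i})=\omega^{b_i}I$; the dimension may be infinite, which is permitted.

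The main obstacle, and the place where passing from prime $d$ to arbitrary $d$ genuinely bites, is the nonvanishing criterion for $P$. For prime $d$, $J\neq e$ already forces order $d$, so Cleve et al need only assume $J\neq e$; for composite $d$ one must additionally rule out every proper-divisor case, which is exactly the geometric-sum cancellation above and the reason the statement must be strengthened from ``$J\neq e$'' to ``$J$ has order $d$.'' A secondary technical concern is maintaining unitarity and well-definedness in the possibly infinite-dimensional setting, which I would handle by passing to the completion $\ell^2(\mc{G})$ and exploiting centrality of $J$ so that it acts as the genuine scalar $\omega$ on $\mathbb{C}[\mc{G}]P$.
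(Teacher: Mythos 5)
Your proof is correct and follows essentially the same route as the paper: the forward direction via \Cref{reps of sg}, and the converse via the left regular representation of $\mc{G}(M,b)$ acting on the $\omega$-eigenspace of $J$ inside the group algebra, whose nontriviality is exactly the condition that $J$ have order $d$. The only cosmetic difference is that you package the paper's fiducial vector $d^{-1/2}\sum_c \omega^c \ket{J^c}$ as the central idempotent $P$, with a conjugate phase convention that lets you set $A_j=\phi(g_j)$ where the paper must use $L_{g_j}^{-1}$.
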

The proof of \Cref{thm: char1} does not differ much from that of \Cref{thm: cleve}, which can be found in \cite{cleve2017perfect}. We provide a somewhat more concise version of it in \Cref{app: proof}.
Note that \Cref{thm: cleve} is a special case of \Cref{thm: char1} since, for prime $d$, cyclicity of the powers  of $J$ imply that $J\neq e$ if and only the order of $J$ is $d$.\\

Generalized Pauli operators obey commutation relations of the form $AB = \omega^c BA$, and are in a sense uniquely determined by these relations. 
More specifcally, consider the finite group $\mc{P}_d$ with generators $\{\mc{X},\mc{Z},\eta\}$ and relations
\begin{align}\label{eq: CCR}
\mc{X}\mc{Z}= \eta \mc{Z}\mc{X},\quad \mc{X}^d = \mc{Z}^d = \eta^{d} = e, \quad \eta \mc{X} = \mc{X} \eta, \quad \eta \mc{Z} = \mc{Z} \eta.
\end{align}
By Schur's lemma and the fact that $\eta^d = e$, every irreducible representation of $\mc{P}_d$ assigns a $d$th root of unity $\omega^{j} I$ to $\eta$.
Furthermore, it was shown by Weyl that up to unitary equivalence every irreducible representation of $\mc{P}_d$ in which $j \neq 0$ maps $\mc{X}$ and $\mc{Z}$ to (phase-multiples of) the clock and shift operators in some dimension $d'$ that divides $d$ \cite{weyl1950theory}.

We consider a particular notion of an embedding of \cref{eq: CCR} into the solution group of a LCS.
Namely, suppose that the solution group contains a relation of the form 
\begin{align}\label{eq: pc}
g^a h^b  = J^c h^b g^a 
\end{align}
for a pair of generators $g$ and $h$ and some nonzero $a,b,c \in \bb{Z}_d$.
We refer to such a relation as a phase-commutation relation in the solution group.
A phase-commutation relation implies that the LCS is classically unsatisfiable;  no integers $x$ and $y$ can satisfy the equation $a x + b y = c + b y + a x   \mod d$ for $c\neq 0 \mod d$.

This notion of an embedding can be motivated by considering the situation for $d = 2$.
Given a group $K$, a collection of elements $k_1, \dots, k_m \in K$ and a non-identity central element $\eta$, such that $k_i^2 = \eta^2 = e$, it was shown in \cite{slofstra2019tsirelson} that there exists a LCS $Mx = b$ and an embedding $\psi: K \rightarrow \mc{G}(M,b)$, such that $\psi(k_i) = g_i$ and $\psi(\eta) = J$.
As $\psi$ is injective, this implies that $J \neq e$, and so every such $K$ embeds in the solution group of some LCS which has a (possibly infinite-dimensional) quantum solution.

In contrast, we show next that, for odd $d$, a phase-commutation relation implies that quantum solutions to the LCS in question do not exist in any dimension, which rules out a satisfiability gap for any LCS in which \cref{eq: CCR} embeds in the solution group.

\begin{theorem}\label{thm: phase commutation}
Let $d$ be odd. Suppose that a LCS over $\bb{Z}_d$ has a phase-commutation relation. Then it has no quantum solutions.
\end{theorem}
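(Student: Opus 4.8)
The plan is to argue by contraposition using the characterization in \Cref{thm: char1}: I assume that $(M,b)$ admits a quantum solution and show that this is incompatible with a phase-commutation relation \cref{eq: pc} when $d$ is odd, by forcing the order of $J$ strictly below $d$. By \Cref{reps of sg} a quantum solution yields a representation $\phi$ of $\mc{G}(M,b)$ with $\phi(J)=\omega I$, and $\omega$ is a primitive $d$th root of unity; conversely, anything that forces the order of $J$ below $d$ rules out a solution. So the whole task is to extract, from \cref{eq: pc} together with the defining relations $g^d=h^d=J^d=e$ and the centrality of $J$, enough information to constrain the order of $J$.

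First I would record the elementary consequences of \cref{eq: pc} inside the group. Writing $U=g^a$ and $V=h^b$, the relation reads $UV=J^cVU$ with $J$ central, so conjugation by $U$ scales $V$ by $J^{c}$ and conjugation by $V$ scales $U$ by $J^{-c}$. Iterating conjugation by $U$ exactly $d/\gcd(a,d)$ times returns $U^{d/\gcd(a,d)}=g^{a\,d/\gcd(a,d)}=e$, which forces
\[
J^{\,c\,d/\gcd(a,d)}=e,
\]
and symmetrically $J^{\,c\,d/\gcd(b,d)}=e$. Hence the order of $J$ divides $\gcd\!\big(cd/\gcd(a,d),\,cd/\gcd(b,d),\,d\big)$. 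Whenever $\gcd(a,d)\nmid c$ or $\gcd(b,d)\nmid c$ this is a proper divisor of $d$, so the order of $J$ is below $d$ and \Cref{thm: char1} immediately rules out a quantum solution. This disposes of every case in which $a$ or $b$ is a zero divisor incompatible with $c$.

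The remaining, and essential, case is when $a$ and $b$ are units modulo $d$ (the genuine embedding of the canonical commutation relations \cref{eq: CCR}): there the relations above give only the vacuous $J^{cd}=e$, and indeed the abstract group generated is a Heisenberg group in which $J$ has full order $d$, faithfully represented by clock-and-shift operators. The combinatorial argument therefore cannot conclude, and this is exactly where odd-dimensionality must enter. Passing to $\phi$, I would invoke Weyl's theorem to decompose the restriction of $\phi$ to $\langle U,V\rangle$ into irreducible clock-and-shift blocks, each of odd dimension $d'=d/\gcd(c,d)$ since $d$ is odd. On such a block the operators are, up to equivalence, generalized Pauli operators, so the single-point phase-point operator $\Pi_{d'}$ from the proof of \Cref{thm:pauli_assignment}---whose trace is nonzero precisely because $d'$ is odd---defines a consistent value. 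The aim is to use this assignment to read off classical values for the variables in \cref{eq: pc} and contradict the classical unsatisfiability encoded by phase-commutation, namely that no integers satisfy $ax+by\equiv c+by+ax \pmod d$ with $c\neq 0$.

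I expect the unit case to be the main obstacle: the Heisenberg relation is locally consistent for every $d$, so the contradiction cannot come from \cref{eq: pc} in isolation but must exploit that a quantum solution is a representation of the \emph{entire} solution group, together with the odd-$d$ nonnegativity of the Wigner function (equivalently, the existence of a good phase-point operator). Making the Weyl-theorem reduction interface cleanly with the global constraint relations---so that the odd-dimensional value map of \Cref{thm:pauli_assignment} forces a classical assignment---is the step I would need to be most careful about, and where I would expect most of the work to lie.
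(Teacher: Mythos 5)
Your opening reduction is sound and matches the paper's strategy at the top level: by \Cref{reps of sg} and \Cref{thm: char1} it suffices to force the order of $J$ strictly below $d$. Your first step---iterating conjugation by $U=g^a$ to get $J^{cd/\gcd(a,d)}=e$---is correct and does dispose of the cases where $\gcd(a,d)\nmid c$ or $\gcd(b,d)\nmid c$. But, as you yourself note, this says nothing when $a$ and $b$ are units modulo $d$ (in particular, for \emph{every} instance with $d$ prime, which is the case the paper's examples actually need), because the subgroup $\langle g^a, h^b, J\rangle$ is then abstractly consistent with $J$ having full order $d$: it is just a Heisenberg group. The contradiction must come from the interaction of \cref{eq: pc} with the \emph{other} defining relations of $\mc{G}(M,b)$, and your proposal for extracting it does not work. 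The value map of \Cref{thm:pauli_assignment} is additive only on \emph{commuting} generalized Pauli operators, whereas $U$ and $V$ by hypothesis do not commute; moreover the remaining generators $\phi(g_j)$ need not be Pauli operators and need not respect the Weyl block decomposition of $\phi|_{\langle U,V\rangle}$, so there is no way to ``read off classical values'' for the full variable set from a phase-point operator on those blocks. The step you flag as ``where most of the work would lie'' is not a technical loose end but the entire content of the theorem, and the route you sketch (Weyl decomposition plus $\Pi_{d'}$) has no mechanism for reaching the global constraints.

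The paper's actual argument is quite different and purely combinatorial. Writing $w=J^{-c}g^ah^bg^{-a}h^{-b}$, which is trivial in $\mc{G}$, one expresses $w$ in the free group as a product of conjugates $u_iq_iu_i^{-1}$ of defining relators and then \emph{reverses} every word. Because each relator $q_i$ of a solution group is a word in mutually commuting letters, its reversal $\tilde q_i$ is again trivial in $\mc{G}$, so the reversed word $\tilde W$ is trivial; since the same contractions reduce $\tilde W$ to $\tilde w$, one gets $\tilde w = e$, i.e.\ the reflected relation $h^bg^a=J^{c}g^ah^b$. Combined with \cref{eq: pc} this yields $J^{2c}=e$, and oddness of $d$ enters only at the very end, through $2c\neq 0 \bmod d$. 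Note that this uses no representation theory and no positivity of a Wigner function; the single-phase-point-operator idea is specific to \Cref{thm:pauli_assignment} and does not transfer here. To repair your proof you would need to supply, in the unit case, some substitute for this reflection symmetry of the relators; nothing in your current outline plays that role.
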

\Cref{thm: phase commutation} can be proven diagrammatically using symmetries of the so-called group-picture associated with solution groups \cite{slofstra2019tsirelson,coladangelo2017robust}. 
We avoid group-pictures for the sake of simplicity, 
and present a more direct proof instead.
\begin{proof} Let $S \equiv \{ J, g_1, \dots, g_n\}$ denote the generators of the solution group $\mc{G}$, and $S' \equiv \{ J^{-1}, g_1^{-1}, \dots, g_n^{-1}\}$. 
Let $R \subset \mc{F}(S)$ be the set of relations defining $\mc{G}$, so that $\mc{G}= \mc{F}(S)/N(R)$. 
Let $g \in S$ and $h \in S$ be a phase-commuting pair, so that, in $\mc{G}$,
\begin{align}\label{eq: pc def}
g^a h^b   = J^c h^b g^a, \quad c\neq 0.
\end{align}
Let $w$ be the word $J^{-c} g^ah^bg^{-a} h^{-b}$.
Since $w = e$ in $\mc{G}$, it holds that, in $\mc{F}(S)$, $w$ is equivalent to an element of the normal closure $N(R)$. This implies that there exist $u_1, \dots, u_{\ell} \in \mc{F}(S)$ and $q_1, \dots, q_\ell \in R$, such that the word
\begin{align}\label{word1}
W \equiv (u_1 q_1 u_1^{-1}) (u_2 q_2 u_2^{-1}) \dots (u_\ell q_\ell u_\ell^{-1}),
\end{align}
is equivalent to $w$ in $\mc{F}(S)$, that is, $W$ can be reduced to $w$ using only generator contractions of the form $g_i g_i^{-1} = e$.
Let $\tilde{W}$ be the ``reflected'' version of $W$
\begin{align}
\tilde{W} \equiv  (\tilde{u}_\ell^{-1} \tilde{q}_\ell \tilde{u}_\ell) \dots  (\tilde{u}_2^{-1} \tilde{q}_2 \tilde{u}_2)  (\tilde{u}_1^{-1} \tilde{q}_1 \tilde{u}_1),
\end{align}
where $\tilde{x}$ is obtained from $x$ by reversing the order of multiplication of the letters making up $x$. For example, if $x = g_1 g_2 g_3$ then $\tilde{x} = g_3 g_2 g_1$. 

Recall, from the definition of solution groups, that each of the relations in $R$ (and therefore each of the $q_i$) consists of mutually commuting letters from $S \cup S'$. It therefore holds that $\tilde{q}_i = q_i = e$ in $\mc{G}$. This implies that $\tilde{W} = e$ in $\mc{G}$. 
Furthermore, we have $\tilde{W}=\tilde{w}$ in $\mc{G}$, since the same sequence of generator contractions that reduces $W$ to $w$ can be applied to reduce $\tilde{W}$ to $\tilde{w}$, so $\tilde{W}$ is equivalent to $\tilde{w}$ in $\mc{F}(S)$, and therefore also in $\mc{G}$.

Thus we have $\tilde{w} = e$ in $\mc{G}$, and therefore $h^b g^a = J^{c} g^a h^b$ in $\mc{G}$. Combining this with $g^a h^b = J^c h^b g^a$ implies that $J^{2c} =e$ in $\mc{G}$. Since $d$ is odd, we have $2c \neq 0 \mod d$. Therefore, by \Cref{thm: char1}, there are no quantum solutions.
\end{proof}

The statement and proof of \Cref{thm: phase commutation} can be generalized by considering relations of the form
\begin{align}\label{eq: generalized pc}
g_1^{a_1} \dots g_k^{a_k} = J^c g_k^{a_k} \dots g_1^{a_1}.
\end{align}
Following the same steps in the proof of \Cref{thm: phase commutation}, we can deduce that $J^{2c} = e$, which, for $c \neq 0 \mod d$, implies that the order of $J$ is less than $d$, and hence that there are no quantum solutions. 
\section{Examples}\label{sec: examples}
\subsection{Magic square modulo prime $d>2$}
Let $d>2$ be prime. Here we consider a broad class of generalizations of the magic square LCS. Namely, we consider any LCS، over $\bb{Z}_d$ of the form in \Cref{fig:Mermin}~(a) with arbitrary coefficients, i.e. any system of the form
\begin{align} \label{ms dlc}
&a_1 x_1 + a_2 x_2 + a_3 x_3 = b_1, \nonumber\\
&a_4 x_4 +  a_5 x_5 + a_6 x_6 = b_2, \nonumber\\
& a_7 x_7 + a_8 x_8 + a_9 x_9 = b_3, \nonumber\\
&a_{1}' x_1 + a_{4}' x_4 + a_{7}' x_7 = b_4, \nonumber\\
&a_{2}' x_2 + a_{5}' x_5 + a_{8}' x_8 = b_5, \nonumber\\
&a_{3}'  x_3  + a_{6}' x_6 + a_{9}' x_9 = b_6,
\end{align}
for some $b \in \bb{Z}_d^6$, and some set of coefficients $a_i, a_i' \in \bb{Z}_d - \{ 0\}$.

We can simplify by making some reductions.
First, we assume that the coefficients in the first three constraints are all equal to one, if necessary by relabeling $x_i \rightarrow a_i x_i$.
Second, as $d$ is prime, every nonzero element of $\bb{Z}_d$ has a multiplicative inverse, so we can divide the last three constraints respectively by the coefficients of $x_1$, $x_2$, and $x_3$. These two steps yield a LCS of the form
\begin{align} \label{ms dlc reduced}
 x_1 + x_2 + x_3 &= b_1 \nonumber\\
 x_4 + x_5 + x_6 &= b_2 \nonumber\\
 x_7 + x_8 + x_9 &= b_3 \nonumber\\
 x_1 +{\gamma_4} x_4 +{\gamma_7} x_7 &= b_4 \nonumber\\
 x_2 + {\gamma_5} x_5 + {\gamma_8} x_8 &= b_5 \nonumber\\
 x_3 + {\gamma_6} x_6 + {\gamma_9} x_9 &= b_6,
\end{align}
for some $\gamma_i \in \bb{Z}_d - \{ 0\}$, and (generally different) $b \in \bb{Z}^6_d$.
Note that these reductions preserve classical and quantum satisfiability. Denote the above reduced LCS by $M x = b$, where 
\begin{align}
M = \left(
\begin{array}{ccccccccc}
 1 & 1 & 1 & 0 & 0 & 0 & 0 & 0 & 0 \\
 0 & 0 & 0 & 1 & 1 & 1 & 0 & 0 & 0 \\
 0 & 0 & 0 & 0 & 0 & 0 & 1 & 1 & 1 \\
 1 & 0 & 0 &\gamma_4 & 0 & 0 & \gamma_7 & 0 & 0 \\
 0 & 1 & 0 & 0 & \gamma_5& 0 & 0 &\gamma_8& 0 \\
 0 & 0 & 1 & 0 & 0 & \gamma_6& 0 & 0 & \gamma_9 \\
\end{array}
\right).
\end{align}
It is straightforward to check that
\begin{align}\label{ms rank}
\text{rank}(M) = 
\begin{cases}
5 & \text{if } \gamma_4 = \gamma_5 = \gamma_6 \text{ and } \gamma_7 = \gamma_8 = \gamma_9,\\
6 & \text{otherwise},
\end{cases}
\end{align}
where the rank is computed modulo $d$ (which is well-defined as $d$ is prime).
Thus the system $Mx =b$ is classically unsatisfiable only if $\gamma_4 = \gamma_5 = \gamma_6$ and $\gamma_7 = \gamma_8 = \gamma_9$.
Essentially this means that we only need to check $(d-1)^2$ cases, corresponding to the values of the $\gamma_i \in \bb{Z}_d - \{ 0\}$ that could make the LCS classically unsatisfiable.
As we will see next, in each of these cases a phase-commutation relation prohibits the existence of a satisfiability gap.
\begin{theorem}\label{ms thm}
Let $d > 2$ be a prime integer. A LCS of the form in \cref{ms dlc} over $\bb{Z}_d$ has a quantum solution if and only if it is classically satisfiable.
\end{theorem}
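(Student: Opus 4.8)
The plan is to prove the two implications separately. The forward direction is immediate: if \cref{ms dlc} is classically satisfiable by some $a=(a_1,\dots,a_9)$, then $A_j=\omega^{a_j}I$ is a one-dimensional quantum solution, so classical satisfiability always entails quantum satisfiability. For the converse I will argue the contrapositive, assuming classical unsatisfiability and producing a phase-commutation relation in the solution group so that \Cref{thm: phase commutation} applies. First I invoke the reductions leading to the form \cref{ms dlc reduced} together with the rank formula \cref{ms rank}, which restrict attention to the only classically unsatisfiable possibility, namely $\gamma_4=\gamma_5=\gamma_6=:\gamma$ and $\gamma_7=\gamma_8=\gamma_9=:\gamma'$. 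In that case the unique (up to scale) dependence among the rows is $r_4+r_5+r_6=r_1+\gamma r_2+\gamma' r_3$, so unsatisfiability is precisely
\[
\delta := (b_4+b_5+b_6)-(b_1+\gamma b_2+\gamma' b_3)\neq 0 \pmod d .
\]

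The heart of the argument is to extract, from the six defining relations of $\mc{G}(M,b)$, a phase-commutation relation between a ``diagonal'' pair of generators, which I expect to be $g_1 g_5^{\gamma}=J^{\delta} g_5^{\gamma} g_1$. The two starting observations are that $g_1$ commutes with each of the partial products $g_5^{\gamma}g_6^{\gamma}$ and $g_5^{\gamma}g_8^{\gamma'}$: the row relation $g_4 g_5 g_6=J^{b_2}$ gives $g_5^{\gamma}g_6^{\gamma}=J^{\gamma b_2}g_4^{-\gamma}$, which commutes with $g_1$ since $g_1,g_4$ share column $1$; and the column relation $g_2 g_5^{\gamma}g_8^{\gamma'}=J^{b_5}$ gives $g_5^{\gamma}g_8^{\gamma'}=J^{b_5}g_2^{-1}$, which commutes with $g_1$ since $g_1,g_2$ share row $1$. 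Writing $T:=g_1 g_5^{\gamma}g_1^{-1}$ and using the analogous facts for column $3$ ($g_3 g_6^{\gamma}g_9^{\gamma'}=J^{b_6}$) and row $3$ ($g_7 g_8 g_9=J^{b_3}$), I can express the conjugates $g_1 g_6^{\gamma}g_1^{-1}$, $g_1 g_8^{\gamma'}g_1^{-1}$, $g_1 g_9^{\gamma'}g_1^{-1}$ in terms of $T$ and substitute them back, obtaining a self-referential equation for $T$. Solving it should force $T=J^{\delta}g_5^{\gamma}$, i.e. $[g_1,g_5^{\gamma}]=J^{\delta}$, which is the desired phase-commutation relation with $c=\delta$. (Should the two-variable relation prove awkward, the generalized form \cref{eq: generalized pc} obtained by comparing the row-ordered and column-ordered products $g_1 g_2 g_3 g_4^{\gamma}g_5^{\gamma}g_6^{\gamma}g_7^{\gamma'}g_8^{\gamma'}g_9^{\gamma'}$ and $g_1 g_4^{\gamma}g_7^{\gamma'}g_2 g_5^{\gamma}g_8^{\gamma'}g_3 g_6^{\gamma}g_9^{\gamma'}$ would serve equally well.) This closure is the main obstacle: it is exactly the ``sliding'' around the square that the group-picture formalism would automate, and the delicate point is that the $J$-powers contributed by the individual relations do not cancel but accumulate to the classical obstruction $\delta$.

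Finally, with a phase-commutation relation $g_1 g_5^{\gamma}=J^{\delta}g_5^{\gamma}g_1$ in hand and $\delta\neq 0\pmod d$, \Cref{thm: phase commutation} applies (since $d>2$ prime is in particular odd) and yields the absence of a quantum solution in any dimension; combined with the easy direction this shows that quantum and classical satisfiability coincide. Because the derivation is uniform in $\gamma,\gamma'$, it simultaneously covers all $(d-1)^2$ candidate cases. I expect the only genuinely delicate step to be the explicit derivation of the phase-commutation relation and the verification that its phase is the nonzero quantity $\delta$; a useful sanity check en route is the $d=2$ Peres--Mermin instance, where $\delta=1$ and the relation reduces to the anticommutation $g_1 g_5=J g_5 g_1$ of the corresponding Pauli operators.
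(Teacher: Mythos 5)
Your overall strategy coincides with the paper's: reduce to \cref{ms dlc reduced}, use \cref{ms rank} to isolate the case $\gamma_4=\gamma_5=\gamma_6$, $\gamma_7=\gamma_8=\gamma_9$, identify the classical obstruction $\delta$, extract a phase-commutation relation whose $J$-exponent is $\pm\delta$, and invoke \Cref{thm: phase commutation}. The forward direction, the rank reduction, and the identification of $\delta$ as the unique consistency condition are all correct. However, the one step that carries the actual content of the theorem --- producing the phase-commutation relation inside $\mc{G}(M,b)$ --- is not carried out: you set up a conjugate $T=g_1 g_5^{\gamma}g_1^{-1}$, describe a ``self-referential equation'' for it, and then write that solving it \emph{should} force $T=J^{\delta}g_5^{\gamma}$, while yourself flagging this closure as the main obstacle. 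As it stands this is a plan, not a proof. The derivation is in fact short and direct: starting from $g_1 g_4^{\gamma} g_2 g_5^{\gamma}$, the two column relations for columns $1$ and $2$ rewrite it as $J^{b_4+b_5}g_7^{-\gamma'}g_8^{-\gamma'}$; row $3$ turns this into $J^{b_4+b_5-\gamma' b_3}g_9^{\gamma'}$; column $3$ into $J^{b_4+b_5+b_6-\gamma' b_3}g_3^{-1}g_6^{-\gamma}$; and rows $1$ and $2$ back into $J^{\delta} g_1 g_2 g_4^{\gamma}g_5^{\gamma}$. Cancelling $g_1$ on the left and $g_5^{\gamma}$ on the right gives $g_4^{\gamma}g_2 = J^{\delta} g_2 g_4^{\gamma}$, which is exactly the relation you want (for the pair $(g_2,g_4)$ rather than $(g_1,g_5)$, though by the row/column symmetry of the square either pair works).

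Two further cautions. First, your fallback --- comparing the row-ordered product $g_1g_2g_3\,(g_4g_5g_6)^{\gamma}(g_7g_8g_9)^{\gamma'}=J^{b_1+\gamma b_2+\gamma' b_3}$ with the column-ordered product $=J^{b_4+b_5+b_6}$ --- does not directly fit \cref{eq: generalized pc}, which requires a word equal to $J^c$ times its \emph{reversal}, not an arbitrary permutation of its letters; showing that the reordering cost between the two products is a central power of $J$ is essentially the same computation you are trying to avoid, so this escape hatch buys nothing as stated. Second, in your contrapositive framing you need ``$\delta=0$ implies classically satisfiable'' to conclude that unsatisfiability forces $\delta\neq 0$; this is true (the paper exhibits an explicit solution $x_b$ when $\delta=0$), but it should be stated or witnessed rather than absorbed into the phrase ``unsatisfiability is precisely $\delta\neq 0$.''
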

\begin{proof}
If a classical solution exists then a quantum solution exists.
Conversely, assume that a quantum solution exists, and denote the associated reduced LCS in \cref{ms dlc reduced} by $Mx=b$. Assume that $\gamma_4=\gamma_5=\gamma_6\equiv \gamma$ and $\gamma_7=\gamma_8=\gamma_9 \equiv \delta$, otherwise there is a classical solution and we are done. 
In the solution group $\mc{G}(M,b)$, we have
\begin{align} \label{ms dlc os}
g_1  g_2  g_3 &= J^{b_1} \quad
g_4  g_5  g_6 = J^{b_2} \quad
g_7  g_8  g_9 = J^{b_3} \nonumber\\
g_1  g_4^{\gamma}  g_7^{\delta} &= J^{b_4} \quad
g_2  g_5^{\gamma}  g_8^{\delta} = J^{b_5} \quad
g_3  g_6^{\gamma}  g_9^{\delta} = J^{b_6}.
\end{align}
We note the group identity
\begin{align}
g_1  g_4^{\gamma}g_2  g_5^{\gamma}
&= J^{b_4 + b_5} g_7^{-\delta} g_8^{-\delta} \nonumber\\
&= J^{-\delta b_3 + b_4 + b_5} g_9^{\delta} \nonumber\\
&= J^{-\delta b_3 + b_4 + b_5+b_6} g_3^{-1} g_6^{- \gamma } \nonumber\\
&= J^{b_1-\gamma b_2-\delta b_3 + b_4 + b_5+b_6} g_1 g_2 g_4^{\gamma } g_5^{\gamma}.
\end{align}
Multiplying both sides on the left by $g_1^{-1}$ and on the right by $g_5^{-\gamma}$ reveals the phase-commutation
\begin{align}\label{ms pc}
g_2 g_4^{\gamma} = J^{-b_1-\gamma b_2-\delta b_3 + b_4 + b_5+b_6} g_4^{\gamma} g_2. 
\end{align}
Since $d$ is odd, by \Cref{thm: phase commutation} and \cref{ms pc}, a necessary condition for the existence of a quantum solution is that $-b_1-\gamma b_2-\delta b_3 + b_4 + b_5+b_6 = 0 \mod d$.
This implies that the LCS is classically satisfiable: set $x_b = (-\gamma b_2-\delta b_3+b_4,b_5,b_1+\gamma b_2+\delta b_3-b_4-b_5,b_2,0,0,b_3,0,0)$, then $M x_b = (b_1,b_2,b_3,b_4,b_5,b_1+\gamma b_2+\delta b_3 - b_4 - b_5) = b$. This shows that the original LCS in \cref{ms dlc} has a classical solution.
\end{proof}
\subsection{Magic pentagram modulo prime $d>2$}
Similar to the magic square LCS, we consider any LCS، over $\bb{Z}_d$ of the pentagram form in \Cref{fig:Mermin}~(b) with arbitrary coefficients, i.e. any system of the form
\begin{align}\label{mp dlc}
a_1 x_1  +  a_2   x_2 +  a_3  x_3  +  a_4   x_4 &= b_1	\notag\\
a_1' x_1 +  a_5  x_5   +  a_6  x_6  +  a_7   x_7 &= b_2	\notag\\
a_2'  x_2  +  a_5'   x_5  +  a_8   x_8  +  a_9   x_9 &= b_3	\notag\\
a_3' x_3  +  a_6'   x_6  +  a_8'   x_8  +   a_{10}   x_{10} &= b_4	\notag\\
a_4' x_4  +  a_7'   x_7  +  a_9'   x_9  +  a_{10}'   x_{10} &= b_5.	
\end{align}  
We can perform similar reductions to the magic square case.
First, we set all coefficients equal to 1 in the first constraint by relabeling $x_i\rightarrow  a_i x_i$ if necessary.
Next, we divide the remaining four constraints by the coefficients of the first variable from the left. Finally, for $i=5,\dots,10$, we relabel $x_i\rightarrow a_i x_i$ if necessary to ensure that the coefficient of each $x_i$ equals 1 in the first constraint it appears in. 
This yields the LCS
\begin{align}\label{pentagram red}
 x_1  +   x_2 +  x_3  +   x_{4} &= b_1	\notag\\
 x_1 +  x_5  +  x_6  +   x_7 &= b_2	\notag\\
 x_2  +\gamma_5   x_5  +  x_8  +  x_9 &= b_3	\notag\\
 x_3  + \gamma_6  x_6  +   \gamma_8 x_8  +  x_{10} &= b_4	\notag\\
 x_4  +  \gamma_7 x_7  + \gamma_9   x_9  + \gamma_{10}   x_{10} &= b_5,	
\end{align}  
for some $\gamma_i \in \bb{Z}_d-\{0\}$, and (generally different) $b \in \bb{Z}^5_d$.
As before, the reduction preserves quantum and classical satisfiability. Denote the reduced LCS above by $Mx  = b$, where
\begin{align}
M = \left(
\begin{array}{cccccccccc}
 1 & 1 & 1 & 1 & 0 & 0 & 0 & 0 & 0 & 0 \\
 1 & 0 & 0 & 0 & 1 & 1 & 1 & 0 & 0 & 0 \\
 0 & 1 & 0 & 0 &\gamma_5 & 0 & 0 & 1 & 1 & 0 \\
 0 & 0 & 1 & 0 & 0 & \gamma_6 & 0 & \gamma_8 & 0 & 1 \\
 0 & 0 & 0 & 1 & 0 & 0 & \gamma_7 & 0 & \gamma_9 & \gamma_{10} \\
\end{array}
\right).
\end{align}
It is straightforward to verify that
\begin{align}\label{mp rank}
\text{rank}(M) = 
\begin{cases}
4 & \text{if } \gamma_i = -1 \text{ for all } i,\\
5 & \text{otherwise},
\end{cases}
\end{align} 
where the rank is computed modulo $d$.
Thus the system $Mx =b$ is classically unsatisfiable only if $\gamma_i = -1$ for all $i$. 
Again, we find that phase-commutation prohibits the existence of a satisfiability gap in this case.
\begin{theorem}\label{mp thm}
Let $d > 2$ be a prime integer. A LCS of the form in \cref{mp dlc} over $\bb{Z}_d$ has a quantum solution if and only if it is classically satisfiable.
\end{theorem}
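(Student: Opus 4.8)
The plan is to reuse the strategy behind \Cref{ms thm}: reduce the system, exhibit a phase-commutation relation inside the solution group, invoke \Cref{thm: phase commutation} to force the phase to vanish, and read off a classical solution from the resulting constraint on $b$. The ``if'' direction is immediate, so I assume a quantum solution exists and pass to the reduced form \cref{pentagram red}. By the rank computation \cref{mp rank}, if $\gamma_i \neq -1$ for some $i$ the system has rank $5$ and is classically satisfiable, so I may assume $\gamma_i = -1$ for all $i$. The solution-group relations then read $g_1 g_2 g_3 g_4 = J^{b_1}$, $g_1 g_5 g_6 g_7 = J^{b_2}$, $g_2 g_5^{-1} g_8 g_9 = J^{b_3}$, $g_3 g_6^{-1} g_8^{-1} g_{10} = J^{b_4}$, and $g_4 g_7^{-1} g_9^{-1} g_{10}^{-1} = J^{b_5}$.

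The key step is to produce a phase-commutation relation. The idea is to note that among the last four relations each of the ``spoke'' generators $g_1, g_2, g_3, g_4$ occurs exactly once, solve those relations for the spokes, and substitute into the first relation $g_1 g_2 g_3 g_4 = J^{b_1}$, collecting the powers of $J$. Since each of the ``rim'' generators $g_5,\dots,g_{10}$ then appears once with a positive and once with a negative exponent, the substitution yields a word $W$ in the rim generators alone obeying $W = J^{\,b_1 - b_2 - b_3 - b_4 - b_5}$.

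I would then simplify $W$ using only the commutations the solution group guarantees: two generators commute whenever the corresponding variables co-occur in a constraint. Viewing the ten variables as the edges of $K_5$ on the five constraints, two rim generators commute unless they are vertex-disjoint (``opposite'') edges, of which there are only three pairs. A short sequence of legitimate cancellations collapses $W$ to a single commutator, for instance $g_6^{-1} g_9^{-1} g_6 g_9 = J^{\,b_1 - b_2 - b_3 - b_4 - b_5}$, which is exactly the phase-commutation $g_6 g_9 = J^{\,b_1 - b_2 - b_3 - b_4 - b_5}\, g_9 g_6$. By \Cref{thm: phase commutation}, since $d$ is odd, a quantum solution can exist only if the phase vanishes, that is $b_1 = b_2 + b_3 + b_4 + b_5 \bmod d$.

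Finally, I would observe that this is precisely the consistency condition for \cref{pentagram red}: summing the last four constraints cancels every rim variable and gives $x_1 + x_2 + x_3 + x_4 = b_2 + b_3 + b_4 + b_5$, which agrees with the first constraint exactly when $b_1 = b_2 + b_3 + b_4 + b_5$. In that case the assignment $x = (b_2, b_3, b_4, b_5, 0, 0, 0, 0, 0, 0)$ solves the reduced system, and reversing the relabelings yields a classical solution of \cref{mp dlc}. The main obstacle is the middle step: choosing which generators to eliminate so the substitution lands on a clean phase-commutation, and then carrying out the word reduction correctly, since the twelve-letter word $W$ must be collapsed to a commutator using only the shared-vertex commutations, with no unjustified reordering of the three non-commuting pairs.
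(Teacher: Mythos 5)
Your proposal is correct and follows essentially the same route as the paper's proof: reduce to \cref{pentagram red}, use the rank condition to restrict to the all-$(-1)$ case, combine the five solution-group relations into a phase-commutation relation, invoke \Cref{thm: phase commutation} to force $b_1 - b_2 - b_3 - b_4 - b_5 = 0 \bmod d$, and exhibit an explicit classical solution. The only cosmetic difference is that by eliminating the spoke generators you land on the relation $g_6 g_9 = J^{b_1-b_2-b_3-b_4-b_5}\, g_9 g_6$ between two rim generators (your word $W$ does indeed collapse to $g_6^{-1}g_9^{-1}g_6 g_9$ using only the shared-constraint commutations, cancelling $g_5$, then $g_{10}$, then $g_8$, then $g_7$), whereas the paper derives $g_7 g_3^{-1} = J^{-b_1+b_2+b_3+b_4+b_5} g_3^{-1} g_7$; both yield the same condition on $b$ and the rest of the argument is identical.
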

\begin{proof}
If a classical solution exists then a quantum solution exists. Conversely, assume that a quantum solution exists, and denote the associated reduced LCS in \cref{pentagram red} by $Mx=b$. Assume that $\gamma = (-1,-1,-1,-1,-1,-1)$, otherwise there is a classical solution and we are done. 
In the solution group $\mc{G}(M,b)$, we have
\begin{align} \label{mp dlc os}
g_1  g_2  g_3 g_4= J^{b_1} \quad
g_1  g_5 & g_6 g_7 = J^{b_2} \quad
g_2  g_5^{-1}  g_8 g_9 = J^{b_3} \nonumber\\
g_3  g_6^{-1}  g_8^{-1}  g_{10}=& J^{b_4} \quad
g_4  g_7^{-1}  g_9^{-1} g_{10}^{-1}= J^{b_5}.
\end{align}
We can deduce a phase-commutation relation by multiplying the last four constraints:
\begin{align}
J^{b_2 + b_3 + b_4 + b_5} &= (g_1 g_5 g_6 g_7 )( g_2  g_5^{-1}  g_8 g_9  ) (g_3  g_6^{-1}  g_8^{-1}  g_{10}) (g_4  g_7^{-1}  g_9^{-1} g_{10}^{-1}) \nonumber\\
&= (g_1 g_6 g_7 )( g_2 g_9  ) (g_3  g_6^{-1} ) (g_4  g_7^{-1}  g_9^{-1} )  \nonumber\\
&= (g_1 g_7 )( g_2 g_9  ) (g_3) (g_4  g_7^{-1}  g_9^{-1} ) \nonumber\\
&= (g_1 g_7 )( g_2 g_9  ) (g_7^{-1}  g_9^{-1})  (g_3 g_4 )  \nonumber\\
&=(g_1 g_7 )g_2 g_7^{-1}   (g_3 g_4) \nonumber\\
&=g_7 ( g_1  g_2 g_4 ) g_7^{-1}  g_3
\end{align}
where in the third and fourth lines we use the facts that $g_6$ commutes with $g_2 g_9$, and $g_3$ commutes with $g_7g_9$, respectively. 
Substituting $g_1 g_2 g_4 = J^{b_1} g_3^{-1}$ reveals the phase-commutation relation
\begin{align}\label{pent pc}
g_7 g_3^{-1} &= J^{-b_1 + b_2 + b_3 + b_4 + b_5} g_3^{-1}  g_7.
\end{align}
Since $d$ is odd, by \Cref{thm: phase commutation} and \cref{pent pc}, a necessary condition for the existence of a quantum solution is that $-b_1 + b_2 + b_3 + b_4 + b_5 = 0 \mod d$.
This implies that the LCS is classically satisfiable: set $x_b = (b_2,b_3,b_4,b_1-b_2-b_3-b_4,0,0,0,0,0,0)$, then $M x_b = (b_1,b_2,b_3,b_4,b_1-b_2- b_3 - b_4) = b$. This shows that the original LCS in \cref{mp dlc} has a classical solution.
\end{proof}
\subsection{Magic square and pentagram modulo odd, composite  $d$}
When $d$ is a composite integer, $\bb{Z}_d$ is not a field, and one cannot define a vector space over $\bb{Z}_d$.
This makes it difficult to generalize the proofs of the previous two subsections to this setting.
In particular, zero-divisors can appear as coefficients in the LCS, making the reductions used in \cref{ms dlc reduced,pentagram red} inapplicable.
Linear algebraic concepts, such as matrix rank and column space, are also invalid in this case, so we cannot use \cref{ms rank,mp rank}.

Nevertheless, if the coefficients in the LCS are restricted to $\{ \pm 1\}$, the arguments used in the prime $d$ case go through almost unchanged. Namely, we can use the same reductions to obtain the LCSs \cref{ms dlc reduced,pentagram red}, and these LCSs satisfy the following.
\begin{lemma}\label{lin alg ms}
If the LCS in \cref{ms dlc reduced} is classically unsatisfiable modulo an odd integer and $\gamma_i \in \{ \pm 1\}$, then $\gamma_4=\gamma_5=\gamma_6 $ and $\gamma_7=\gamma_8=\gamma_9$.
\end{lemma}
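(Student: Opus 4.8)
The plan is to prove the contrapositive: assuming $\gamma_i \in \{\pm 1\}$ and that it is \emph{not} the case that both $\gamma_4 = \gamma_5 = \gamma_6$ and $\gamma_7 = \gamma_8 = \gamma_9$ hold, I will show that the linear map $M : \bb{Z}_d^9 \to \bb{Z}_d^6$ is surjective, so that $Mx = b$ is solvable for every $b \in \bb{Z}_d^6$---in particular for the given $b$---contradicting unsatisfiability. The difficulty is that the rank count in \cref{ms rank} relies on rank over a field, which is unavailable over $\bb{Z}_d$ when $d$ is composite; so the first move must be to reduce surjectivity over $\bb{Z}_d$ to a statement about the prime factors of $d$.

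Concretely, the first step records the standard fact that an integer matrix $M$ induces a surjection $\bb{Z}_d^9 \to \bb{Z}_d^6$ if and only if, for every prime $p$ dividing $d$, the reduction $\bar M$ has full row rank $6$ over $\bb{F}_p$. This follows by combining the Chinese remainder theorem (reducing to the prime-power factors of $d$) with a lifting argument---via Smith normal form over $\bb{Z}$, or Nakayama's lemma applied to $\bb{Z}_{p^k}^6 / \mathrm{im}(M)$---which shows that surjectivity modulo $p^k$ is equivalent to surjectivity modulo $p$. This reduces the problem to a rank computation over each prime field $\bb{F}_p$.

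The second step is that rank computation, which is precisely the one underlying \cref{ms rank} but now read over $\bb{F}_p$. A direct inspection of the row dependencies of $M$ shows that any relation $\sum_i c_i R_i = 0$ among the rows $R_1,\dots,R_6$ forces $c_4 = c_5 = c_6 = -c_1$, then $c_2 = \gamma_4 c_1 = \gamma_5 c_1 = \gamma_6 c_1$ and $c_3 = \gamma_7 c_1 = \gamma_8 c_1 = \gamma_9 c_1$; a nontrivial relation requires $c_1 \neq 0$. Hence over a field the rows are dependent (rank $5$) exactly when $\gamma_4 = \gamma_5 = \gamma_6$ and $\gamma_7 = \gamma_8 = \gamma_9$, and $\bar M$ has full rank $6$ otherwise.

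The final step brings in the hypotheses that $d$ is odd and $\gamma_i \in \{\pm 1\}$. Since $d$ is odd, every prime $p \mid d$ is odd, so $1 \not\equiv -1 \pmod p$; consequently the equalities $\gamma_4 = \gamma_5 = \gamma_6$ and $\gamma_7 = \gamma_8 = \gamma_9$ hold as relations among elements of $\{\pm 1\}$ if and only if they hold modulo $p$, simultaneously for all $p \mid d$. Thus under the contrapositive hypothesis the rank-$5$ condition fails modulo every $p \mid d$, so $\bar M$ has full row rank $6$ at each such prime, $M$ is surjective over $\bb{Z}_d$ by the first step, and the system is classically satisfiable. I expect the main obstacle to be the first step---phrasing and justifying the passage from the field-level rank statement to surjectivity over the composite ring $\bb{Z}_d$---since this is exactly where the naive appeal to \cref{ms rank} is invalid; once surjectivity is reduced to the primes, the remaining computation is routine and the role of oddness is transparent.
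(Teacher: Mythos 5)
Your proof is correct, but it takes a genuinely different route from the paper's. The paper proves the contrapositive by brute force: using the symmetry $(x_4,x_5,x_6)\leftrightarrow(x_7,x_8,x_9)$ it reduces to the two cases $\gamma_4=-\gamma_5$ and ($\gamma_4=\gamma_5$, $\gamma_5=-\gamma_6$), and in each case simply writes down an explicit classical solution whose entries involve the factor $\tfrac{1}{2}$, which exists modulo odd $d$. You instead argue structurally: surjectivity of $M:\bb{Z}_d^9\to\bb{Z}_d^6$ is reduced, via the Chinese remainder theorem together with Nakayama's lemma (or Smith normal form), to full row rank of $\bar M$ over $\bb{F}_p$ for each prime $p\mid d$; the row-dependency computation (which is exactly the content of \cref{ms rank}, now read over $\bb{F}_p$) shows rank $5$ occurs only when $\gamma_4\equiv\gamma_5\equiv\gamma_6$ and $\gamma_7\equiv\gamma_8\equiv\gamma_9$ modulo $p$; and oddness of $d$ enters because $1\not\equiv-1\pmod p$ for every $p\mid d$, so equalities among the $\gamma_i\in\{\pm1\}$ hold integrally iff they hold modulo each $p$. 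All three steps check out (in particular the relation $\sum_i c_iR_i=0$ does force $c_4=c_5=c_6=-c_1$, $c_2=\gamma_4c_1=\gamma_5c_1=\gamma_6c_1$, $c_3=\gamma_7c_1=\gamma_8c_1=\gamma_9c_1$). The trade-off: the paper's proof is self-contained and verifiable by substitution but offers no insight into why those particular solutions exist; yours explains the mechanism, makes the role of oddness transparent, recovers \cref{ms rank} as a special case, and yields solvability for \emph{every} right-hand side $b$ at once, at the cost of importing the lifting machinery from commutative algebra that the paper deliberately avoids in the composite-$d$ setting.
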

\begin{lemma}\label{lin alg mp}
If the LCS in \cref{pentagram red} is classically unsatisfiable modulo an odd integer and $\gamma_i \in \{ \pm 1\}$, then $\gamma = (-1,-1,-1,-1,-1,-1)$.
\end{lemma}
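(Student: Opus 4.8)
The plan is to prove the contrapositive: assuming that at least one $\gamma_i$ equals $+1$, I would show that the coefficient matrix $M$ of \cref{pentagram red} is surjective as a map $\bb{Z}_d^{10}\to\bb{Z}_d^{5}$, so that $Mx=b$ is solvable for \emph{every} $b$ and in particular the system cannot be classically unsatisfiable. Since $\gamma_i\in\{\pm1\}$, the matrix $M$ has integer entries, and the structural tool is that surjectivity of an integer matrix modulo $d$ is a local property at the primes. Concretely, by the Chinese remainder theorem $\bb{Z}_d\cong\prod_{p\mid d}\bb{Z}_{p^{e_p}}$, so $M$ is surjective modulo $d$ if and only if it is surjective over each local ring $\bb{Z}_{p^{e}}$; and over such a local ring Nakayama's lemma shows a map $R^{10}\to R^{5}$ is surjective if and only if it remains surjective after reducing modulo the maximal ideal $(p)$. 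Hence $M$ is surjective modulo $d$ precisely when $M\bmod p$ has full row rank $5$ over the field $\bb{F}_p$ for every prime $p\mid d$.

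The second step verifies this full-rank condition prime by prime. Because $d$ is odd, every prime $p\mid d$ satisfies $p\geq 3$, so $1\not\equiv-1\pmod p$. Consequently an integer entry $\gamma_i=+1$ reduces to $1\neq-1$ in $\bb{F}_p$, and the hypothesis ``some $\gamma_i=+1$'' guarantees that the $\gamma_i$ are \emph{not} all congruent to $-1$ modulo $p$. The reduced coefficients $\pm1$ are nonzero in every $\bb{F}_p$, so the rank computation of \cref{mp rank}, which is valid over any prime field, applies verbatim with modulus $p$ and yields $\mathrm{rank}(M\bmod p)=5$. As this holds at every prime divisor of $d$, the local-to-global principle of the first step delivers surjectivity of $M$ modulo $d$, completing the contrapositive.

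The point requiring the most care is the first step — upgrading ``full rank modulo each prime divisor of $d$'' to ``surjective modulo $d$ itself'' rather than merely modulo the squarefree part of $d$ — since over the non-field ring $\bb{Z}_d$ a nonvanishing minor is too weak to force solvability. This is exactly where oddness and the $\{\pm1\}$ restriction do the work: the local-to-global reduction lets one control the minors prime-by-prime, and no odd prime can collapse a $+1$ into a $-1$. The companion \cref{lin alg ms} follows by the identical argument with \cref{ms rank} in place of \cref{mp rank}: if the equalities $\gamma_4=\gamma_5=\gamma_6$ and $\gamma_7=\gamma_8=\gamma_9$ fail over the integers, then two of the relevant $\pm1$ coefficients disagree, and since distinct values in $\{\pm1\}$ remain distinct modulo every odd prime, $\mathrm{rank}(M\bmod p)=6$ at each $p\mid d$, again giving surjectivity and hence classical satisfiability.
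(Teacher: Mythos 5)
Your proposal is correct, but it proves the lemma by a genuinely different route than the paper. The paper's proof is entirely explicit: for each of the six cases $\gamma_i=+1$ ($i=5,\dots,10$) it writes down a concrete solution vector whose entries involve $\tfrac{1}{2}(\pm b_1\pm\cdots\pm b_5)$, and the only structural input is that $2$ is invertible modulo odd $d$. You instead run a local-to-global argument: CRT reduces solvability modulo $d$ to solvability modulo each prime power $p^{e}\mid d$, Nakayama (or, more elementarily, the observation that a $5\times 5$ minor which is a unit mod $p$ is a unit mod $p^{e}$) reduces surjectivity over $\bb{Z}_{p^{e}}$ to full row rank over $\bb{F}_p$, and the rank statement \cref{mp rank} — which does hold over $\bb{F}_p$ for every odd $p$ when the $\gamma_i\in\{\pm1\}$ are not all $-1$, as one checks by noting that a vanishing linear combination of the rows forces $c_2=\cdots=c_5=-c_1$ and then $2c_1=0$ from the column of any $\gamma_i=+1$ — finishes the job. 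Your version is more conceptual and makes transparent exactly where oddness and the $\{\pm1\}$ restriction enter (distinct elements of $\{\pm1\}$ stay distinct modulo every odd prime, so ``not all $-1$'' survives localization); it also uniformly covers \cref{lin alg ms} via \cref{ms rank}, as you note. The paper's version is more self-contained and checkable by direct substitution, and does not lean on \cref{mp rank}, which the paper only asserts without proof; if you adopt your route you should include the short row-independence verification above so that the appeal to \cref{mp rank} over $\bb{F}_p$ is not circular or unsupported.
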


Both lemmas are proven in \Cref{app: proof}. 
We can now immediately generalize \Cref{ms thm,mp thm} to the case of odd $d$ and coefficients in $\{ \pm 1\}$:

\begin{theorem}\label{ms thm odd}
Let $d>1$ be odd. A LCS of the form in \cref{ms dlc} over $\bb{Z}_d$ with coefficients in $\{ \pm 1\}$ has a quantum solution if and only if it is classically satisfiable.
\end{theorem}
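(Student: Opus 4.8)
The plan is to run essentially the same argument used to prove \Cref{ms thm} for prime $d$, substituting \Cref{lin alg ms} for the rank computation in \cref{ms rank}, which is the only place where primality (and hence the field structure of $\bb{Z}_d$) is used. The forward implication is immediate: a classical solution $a$ yields the quantum solution $A_j = \omega^{a_j} I$, exactly as in the proof of \Cref{thm:pauli_assignment}. So I would concentrate on the converse, assuming that a quantum solution exists and aiming to produce a classical one.

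First I would perform the reductions carrying \cref{ms dlc} into the reduced form \cref{ms dlc reduced}. The key observation is that these reductions — relabeling $x_i \to a_i x_i$ and dividing the last three constraints by the coefficients of $x_1, x_2, x_3$ — only ever require inverting coefficients $a_i, a_i'$. Since all coefficients lie in $\{\pm 1\}$, which are units in $\bb{Z}_d$ for every $d$, these steps remain valid even when $\bb{Z}_d$ is not a field, and the resulting $\gamma_i$ are quotients of $\pm 1$ and hence again lie in $\{\pm 1\}$; the reductions preserve both classical and quantum satisfiability. If the reduced system is classically satisfiable we are done; otherwise \Cref{lin alg ms} forces $\gamma_4 = \gamma_5 = \gamma_6 \equiv \gamma$ and $\gamma_7 = \gamma_8 = \gamma_9 \equiv \delta$ with $\gamma,\delta \in \{\pm 1\}$, which is precisely the replacement for \cref{ms rank}.

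With $\gamma$ and $\delta$ fixed, the derivation of the phase-commutation relation \cref{ms pc}, namely $g_2 g_4^{\gamma} = J^{-b_1 - \gamma b_2 - \delta b_3 + b_4 + b_5 + b_6} g_4^{\gamma} g_2$, proceeds verbatim: it is a purely group-theoretic manipulation of the solution-group relations \cref{ms dlc os}, using only the commutations guaranteed by the definition of $\mc{G}(M,b)$ together with arithmetic of exponents modulo $d$, none of which requires $d$ prime. Since $d$ is odd and a quantum solution is assumed to exist, \Cref{thm: phase commutation} forbids a nonzero constant, forcing $-b_1 - \gamma b_2 - \delta b_3 + b_4 + b_5 + b_6 = 0 \bmod d$. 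Finally I would exhibit the explicit classical solution $x_b$ from the proof of \Cref{ms thm} — which is built only from integer combinations of the $b_i$, $\gamma$, and $\delta$ modulo $d$ and makes no use of field structure — and verify $M x_b = b$ using the vanishing of the constant. Undoing the reductions then yields a classical solution of the original \cref{ms dlc}.

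The main obstacle is conceptual rather than computational: over composite $\bb{Z}_d$ the linear-algebraic tools (rank, column space) underpinning \cref{ms rank} are unavailable, so the case analysis reducing the problem to $\gamma_4=\gamma_5=\gamma_6$ and $\gamma_7=\gamma_8=\gamma_9$ must be re-established directly. Restricting the coefficients to $\{\pm 1\}$ is exactly what rescues the argument: it keeps the reduction steps legal (since $\pm 1$ are units) and makes the combinatorial statement of \Cref{lin alg ms} tractable. Everything downstream — the phase-commutation identity and the appeal to \Cref{thm: phase commutation} — is insensitive to whether $d$ is prime, so once \Cref{lin alg ms} is in hand the theorem follows with only cosmetic changes to the prime-$d$ proof.
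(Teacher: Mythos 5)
Your proposal is correct and follows essentially the same route as the paper: the authors likewise observe that the $\{\pm 1\}$ restriction keeps the reduction to \cref{ms dlc reduced} valid, substitute \Cref{lin alg ms} for the rank computation in \cref{ms rank}, and then reuse the proof of \Cref{ms thm} verbatim (the phase-commutation derivation, \Cref{thm: phase commutation}, and the explicit solution $x_b$ all being independent of primality). No gaps.
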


\begin{theorem}\label{mp thm odd}
Let $d>1$ be odd. A LCS of the form in \cref{mp dlc} over $\bb{Z}_d$ with coefficients in $\{ \pm 1\}$ has a quantum solution if and only if it is classically satisfiable.
\end{theorem}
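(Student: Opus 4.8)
The plan is to follow the proof of \Cref{mp thm} for prime $d$ essentially verbatim, substituting for the two steps that used primality: the reductions to the normal form \cref{pentagram red}, and the rank criterion \cref{mp rank} used to isolate the unsatisfiable configuration. The nontrivial direction is the converse, so I would assume a quantum solution exists and aim to produce a classical one; the forward implication is immediate, since a classical solution $a$ yields the quantum solution $A_j = \omega^{a_j} I$.

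First I would verify that the reduction to \cref{pentagram red} survives the loss of the field structure. The only operations used there were relabelings $x_i \to a_i x_i$ and divisions of constraints by leading coefficients. When every coefficient lies in $\{\pm 1\}$, and since $\pm 1$ are self-inverse in $\bb{Z}_d$ for any $d$, each such operation is well defined and each resulting $\gamma_i$ is a product of elements of $\{\pm 1\}$, hence again in $\{\pm 1\}$. So even for composite odd $d$ we may assume the LCS is in the form \cref{pentagram red} with $\gamma_i \in \{\pm 1\}$, with classical and quantum satisfiability preserved.

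Next, because matrix rank over $\bb{Z}_d$ is not meaningful for composite $d$, I would replace \cref{mp rank} with \Cref{lin alg mp}. Its contrapositive says that if $\gamma \neq (-1,\dots,-1)$ then the system is already classically satisfiable, which settles those cases. The only remaining configuration is $\gamma = (-1,\dots,-1)$, precisely the one treated for prime $d$. I would then copy the manipulation of the solution-group relations \cref{mp dlc os}—which relies only on the defining relations of $\mc{G}(M,b)$ and on commutations between generators sharing a constraint, all independent of $d$—to obtain the phase-commutation relation \cref{pent pc}, $g_7 g_3^{-1} = J^{-b_1+b_2+b_3+b_4+b_5} g_3^{-1} g_7$. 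As $d$ is odd, \Cref{thm: phase commutation} forces $-b_1+b_2+b_3+b_4+b_5 = 0 \bmod d$, and the assignment $x_b = (b_2,b_3,b_4,b_1-b_2-b_3-b_4,0,\dots,0)$ then solves the reduced system and hence the original one.

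I expect the real work to sit entirely in \Cref{lin alg mp}: showing directly, by elimination over $\bb{Z}_d$ for composite odd $d$ with $\gamma_i \in \{\pm 1\}$, that classical unsatisfiability forces every $\gamma_i = -1$. This is where the absence of a well-behaved rank must be worked around by hand, and where the parity of $d$ must be used to exclude spurious solvable cases. Once that lemma is in place, the remainder is a faithful transcription of the prime-$d$ proof, the key observation being that both the reductions and the phase-commutation machinery are insensitive to the arithmetic of $\bb{Z}_d$ beyond the facts that $\pm 1$ are invertible and that $d$ is odd.
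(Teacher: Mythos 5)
Your proposal matches the paper's proof exactly: the paper states that the proof of \Cref{mp thm odd} is identical to that of \Cref{mp thm}, with \Cref{lin alg mp} (proven separately in the appendix via explicit solutions using the invertibility of $2$ for odd $d$) replacing the rank criterion \cref{mp rank}, and the reductions to \cref{pentagram red} going through because $\pm 1$ are self-inverse. You have correctly identified both substitutions and the fact that the phase-commutation argument is insensitive to the arithmetic of $\bb{Z}_d$ beyond oddness.
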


The proofs of \Cref{ms thm odd,mp thm odd} are identical to those of \Cref{ms thm,mp thm}, respectively.

\section{Conclusion and remarks}
The existence of a quantum/classical satisfiability gap in a LCS modulo $d$ is deeply related to whether $d$ is even or odd.
Part of this dependence is elucidated by considering generalized Pauli operators.
These operators enter the picture in two ways. First, as shown in \Cref{thm:pauli_assignment}, generalized Pauli operators in odd $d$ cannot be used to demonstrate a satisfiability gap, since every quantum solution consisting of these operators can be reduced to a classical solution.
Second, the canonical commutation relation $AB = \omega^c BA$, obeyed by generalized Pauli operators, is itself sufficient to rule out a satisfiability gap when $d$ is odd. 
Specifically, \Cref{thm: phase commutation} shows that if this relation is embedded in the solution group of a LCS over odd $d$, then quantum solutions do not exist in any dimension.
Both of these facts are in contrast to the case of even $d$; \cref{tab:table} shows a family of LCSs for every even $d$, each of which has a phase-commutation relation, as well as a satisfiability gap which can be demonstrated using generalized Pauli operators in even $d$.
\Cref{thm: phase commutation} can be seen as an obstruction to generalizing the embedding theorem of \cite{slofstra2019tsirelson}, which holds for $d=2$, to odd $d$. The embedding theorem has significant consequences; it plays a key role in proving the separation between the tensor-product and commuting-operator models of two-party quantum correlations.
It is a curious fact that this theorem does not hold if $d$ is odd.

In \Cref{sec: examples} we use the relationship between phase-commutation and satisfiability to prove that certain generalizations of the magic square and pentagram to odd $d$ do not have a satisfiability gap.
The LCSs we consider in these examples can be seen as arising from the incidence matrices of (weighted) graphs. 
Prior work for $d = 2$ characterizes the existence of a gap in terms of planarity of such graphs \cite{arkhipov2012extending}, but this characterization does not carry over to the case of $d > 2$.
The techniques based on phase-commutation, outlined in the examples in \Cref{sec: examples}, may become useful in finding such a generalization.
It is, however, unclear how simple such a characterization might be. 
Planarity of a graph can be checked efficiently, see for example  \cite{boyer2004cutting}, but phase-commutation seems to be a more difficult property to check.

Finally, the results in this paper raise the question of whether there exists a LCS over odd $d$ with a classical/quantum satisfiability gap.
While no such cases are currently known,
an upcoming work by Slofstra and L. Zhang \cite{slofstra2019private} answers this question in the affirmative; there exists a LCS with a  classical/quantum gap modulo $d$ for every prime $d$, with some evidence that the gap is achievable using finite-dimensional quantum solutions.
\\

\noindent \textbf{Acknowledgment:} We thank William Slofstra and Richard Cleve for fruitful discussions and insights, as well as Juani Bermejo-Vega for feedback on an earlier version of the manuscript. This research was undertaken thanks in part to funding from TQT.

\bibliography{lcslib}
\bibliographystyle{unsrt}
\appendix
\section{Proofs}\label{app: proof}
\begin{proof}[Proof of \Cref{thm: char1}]
Suppose $(M,b)$ has a quantum solution $\{ A_i \}$. By \cref{reps of sg}, there is a representation $\phi$ of $\mc{G}(M,b)$ such that $\phi(J) = \omega I$. For $c \neq 0 \mod d$, $\phi(J^c) = \omega^c I  \neq I$, and therefore $J^c \neq e$. Thus the order of $J$ is $d$. Conversely, suppose that $\mc{G}(M,b)$ is such that the order of $J$ is $d$. Define the complex vector space $\mc{H}$ by
\begin{align}
\mc{H} = \Bigg\{ \sum_{g \in \mc{G}} \alpha_g \ket{g} : \norm{\alpha}^2 < \infty \Bigg\}.
\end{align}
Define the operators $\{L_g : g\in \mc{G} \}$ via $L_g \ket{h} = \ket{gh}$. The proof proceeds in two steps. First we list certain properties of the $L_g$. Second, we define a certain subspace of $\mc{H}$ on which the restriction of $L_{g_1}, \dots, L_{g_n}$ is a quantum solution. The relevant properties of the $L_g$ are the following
\begin{enumerate}
\item The eigenvalues of $L_{g_i}$ are $d$th roots of unity:
\begin{align} \label{eq: cond1}
L^d_{g_i} = L_{g_i^d} = L_{e} = I.
\end{align}

\item Let $i, j$ and $k$ be such that $M_{ij}\neq 0$ and $M_{ik} \neq 0$, then, for all $g \in \mc{G}(M,b)$, 
\begin{align}\label{eq: cond2}
L_{g_j} L_{g_k} \ket{g} = \ket{g_j g_k g} = \ket{g_k g_j g} = L_{g_k} L_{g_j} \ket{g}, 
\end{align}
and therefore $[L_{g_j}, L_{g_k} ] = 0$.

\item For any $i \in \{1, \dots, m \}$ and $g \in \mc{G}(M,b)$,
\begin{align}\label{eq: cond3}
L_{g_1}^{M_{i1}} \dots L_{g_n}^{M_{in}}\ket{g} 
&= \ket{g_{1}^{M_{i1}} \dots g_{n}^{M_{in}} g} \nonumber\\
&=\ket{J^{b_i} g} \nonumber\\
&= L_{J}^{b_i} \ket{g}.
\end{align}

\item Define a fiducial vector $\ket{\psi} =d^{-1/2} \sum_{c \in \bb{Z}_d} \omega^c \ket{J^c}.$ Then
\begin{align}\label{eq: cond4}
L_J^{ - b_i} \ket{\psi} =  \sum_{c \in \bb{Z}_d} \omega^c \ket{J^{c-b_i}} = \omega^{b_i} \ket{\psi}.
\end{align}
\end{enumerate}

To construct the quantum solution, first we define the subspace $\mc{H}' \subset \mc{H}$ 
\begin{align}
\mc{H}' \equiv \Bigg\{ E\ket{\psi} : E \in \mathfrak{A} \Bigg\},
\end{align}
where $\mathfrak{A}$ is the algebra generated by $L_{g_1}, \dots, L_{g_n}$ (i.e. all linear combinations of products of the $L_{g_i}$). Note that the powers of $J$ are all distinct by assumption: if $J^a = J^b$ then $J^{a-b} = e$, and hence $a-b = 0 \mod d$.  In particular this implies that $\ket{\psi} \neq 0$, and, as $\ket{\psi} \in \mc{H}'$, $\mc{H}' \neq \{ 0\}$. 

Define the quantum solution $A_i$ as the restriction of $(L_{g_i})^{-1}$ to $\mc{H}'$, i.e. $A_i = L_{g_i}^{-1}|_{\mc{H}'}$.
Let us verify that this satisfies the definition of a quantum solution:
\begin{enumerate}
\item $A_i^d = I$ follows from \cref{eq: cond1},
\item if $M_{ij}\neq 0$ and $M_{ik} \neq 0$ then $[A_j,A_k] = 0$ follows from \cref{eq: cond2},
\item First use \cref{eq: cond4} and the fact that $ L_J^{b_i}$ commutes with every element of $E  \in \mathfrak{A}$ to deduce that $L_J^{-b_i} E \ket{\psi} =  E L_J^{-b_i} \ket{\psi} = \omega^{b_i} E\ket{\psi}$.
From \cref{eq: cond3} we then have
\begin{align}
 \prod_{j} A_{j}^{M_{ij}} =  \prod_{j} L_{g_j}^{-M_{ij}}|_{\mc{H}'} = L_J^{-b_i}|_{\mc{H}'}  = \omega^{b_i} I.
\end{align}
\end{enumerate}
\end{proof}
\begin{proof}[Proof of \cref{lin alg ms}] Due to symmetry under the exchange $(x_4,x_5,x_6) \leftrightarrow (x_7,x_8,x_9)$, it is enough to show that if $(\gamma_4 = \gamma_5 = \gamma_6)$ does not hold then a classical solution exists. Suppose that $\gamma_4 = - \gamma_5$. Then a classical solution is given by
\begin{align}
x_1 &\rightarrow \frac{1}{2} (b_1-\gamma_4 b_2- \gamma_7 b_3+b_4-b_5-b_6)\quad
\quad \;\;\: x_2 \rightarrow \frac{1}{2} (b_1+\gamma_4 b_2+\gamma_7 b_3-b_4+b_5-b_6) \nonumber\\
x_4 &\rightarrow \frac{1}{2} \gamma_4 (-b_1+\gamma_4 b_2-\gamma_7 b_3+b_4+b_5+b_6) \quad
x_5 \rightarrow \frac{1}{2} \gamma_4 (b_1+\gamma_4 b_2+\gamma_7 b_3-b_4-b_5-b_6)  \nonumber\\
x_3 &\rightarrow b_6, \quad 
x_7 \rightarrow b_3, \quad
x_6,x_8, x_9 \rightarrow 0.
\end{align}
Suppose now that $\gamma_4 = \gamma_5$ and $\gamma_5 = - \gamma_6$. Then a classical solution is given by
\begin{align}
x_1 &\rightarrow  \frac{1}{2} \left(b_1 + \gamma_4 b_2 -\gamma_7b_3 +b_4-b_5-b_6\right) \quad\quad\;\;\:
x_3 \rightarrow \frac{1}{2} \left(b_1+ \gamma_4 b_2+\gamma_7 b_3-b_4-b_5+b_6\right) \nonumber\\
x_4 &\rightarrow \frac{1}{2}\gamma_4\left(- b_1 +\gamma_4 b_2-\gamma_7 b_3 + b_4 + b_5 + b_6 \right)\quad
x_6 \rightarrow \frac{1}{2} \gamma_4\left( b_1 +\gamma_4 b_2+ \gamma_7 b_3- b_4- b_5+ b_6\right) \nonumber\\
x_2 &\rightarrow b_5,\quad  x_7 \rightarrow b_3,\quad x_5 ,x_8,x _9 \rightarrow 0.
\end{align}
These solutions are valid modulo $d$ as long as 2 has a multiplicative inverse modulo $d$, which is the case when $d$ is odd.
\end{proof}
\begin{proof}[Proof of \cref{lin alg mp}] 
The following explicit solutions
{\footnotesize
\begin{align}
x\rightarrow \Big(\frac{1}{2} \left(b_1+b_2-b_3-b_4-b_5\right),\frac{1}{2} \left(b_1-b_2+b_3-b_4-b_5\right),b_4,b_5,\frac{1}{2} \left(-b_1+b_2+b_3+b_4+b_5\right),0,0,0,0,0\Big) & \quad \text{if } \gamma_5 =1, \nonumber\\
x\rightarrow \Big(\frac{1}{2} \left(b_1+b_2-b_3-b_4-b_5\right),b_3,\frac{1}{2} \left(b_1-b_2-b_3+b_4-b_5\right),b_5,0,\frac{1}{2} \left(-b_1+b_2+b_3+b_4+b_5\right),0,0,0,0\Big)& \quad \text{if } \gamma_6 =1,\nonumber\\
x\rightarrow \Big( \frac{1}{2} \left(b_1+b_2-b_3-b_4-b_5\right),b_3,b_4,\frac{1}{2} \left(b_1-b_2-b_3-b_4+b_5\right),0,0,\frac{1}{2} \left(-b_1+b_2+b_3+b_4+b_5\right),0,0,0 \Big) & \quad \text{if } \gamma_7 =1,\nonumber\\
x\rightarrow \Big( b_2,\frac{1}{2} \left(b_1-b_2+b_3-b_4-b_5\right),\frac{1}{2} \left(b_1-b_2-b_3+b_4-b_5\right),b_5,0,0,0,\frac{1}{2} \left(-b_1+b_2+b_3+b_4+b_5\right),0,0 \Big) & \quad \text{if } \gamma_8 =1,\nonumber\\
x\rightarrow \Big( b_2,\frac{1}{2} \left(b_1-b_2+b_3-b_4-b_5\right),b_4,\frac{1}{2} \left(b_1-b_2-b_3-b_4+b_5\right),0,0,0,0,\frac{1}{2} \left(-b_1+b_2+b_3+b_4+b_5\right),0 \Big) & \quad \text{if } \gamma_9 =1,\nonumber\\
x\rightarrow \Big( b_2,b_3,\frac{1}{2} \left(b_1-b_2-b_3+b_4-b_5\right),\frac{1}{2} \left(b_1-b_2-b_3-b_4+b_5\right),0,0,0,0,0,\frac{1}{2} \left(-b_1+b_2+b_3+b_4+b_5\right) \Big) & \quad \text{if } \gamma_{10} =1,
\end{align}}%
\noindent prove the statement, provided that $2$ has a multiplicative inverse modulo $d$, which is the case if $d$ is odd.
\end{proof}
\end{document}